 \newcommand{\Rmnum}[1]{\expandafter\@slowromancap\romannumeral #1@}
\newtheorem{theorem}{Theorem}[section]
\newtheorem{proposition}[theorem]{Proposition}
\newtheorem{lemma}[theorem]{Lemma}
\newtheorem{remark}[theorem]{Remark}
\newtheorem{assumption}[theorem]{Assumption}
\newcommand{\C}{{\mathbb C}}
\newcommand{\be}{\begin{equation}}
\newcommand{\ee}{\end{equation}}
\newcommand{\bea}{\begin{eqnarray}}
\newcommand{\eea}{\end{eqnarray}}
\newcommand{\ba}{\begin{array}}
\newcommand{\ea}{\end{array}}
\newcommand{\ol}{\overline}
\newcommand{\id}{\mathbb{I}}
\newcommand{\re}{\mathrm{Re}}
\newcommand{\eps}{\varepsilon}
\newcommand{\sig}{\sigma}
\newcommand{\Sig}{\Sigma}
\newcommand{\Lam}{\Lambda}
\newcommand{\gam}{\gamma}
\newcommand{\Om}{\Omega}
\newcommand{\dta}{\delta}
\newcommand{\Dta}{\Delta}
\newcommand{\tha}{\theta}
\newcommand{\pt}{\partial}
\numberwithin{equation}{section}
\begin{document}
\title[RHP for the 2-NLS equation on the interval]{Initial-boundary value problem for integrable nonlinear evolution equations with $3\times 3$ Lax pairs on the interval}

\author[J.Xu]{Jian Xu*}
\address{College of Science\\
University of Shanghai for Science and Technology\\
Shanghai 200093\\
People's  Republic of China}
\email{correspondence author: jianxu@usst.edu.cn}

\author[E.Fan]{Engui Fan}
\address{School of Mathematical Sciences, Institute of Mathematics and Key Laboratory of Mathematics for Nonlinear Science\\
Fudan University\\
Shanghai 200433\\
People's  Republic of China}
\email{faneg@fudan.edu.cn}

\keywords{Riemann-Hilbert problem, Manakov Systems, Initial-boundary value problem}

\date{\today}

\begin{abstract}
We present an approach for analyzing initial-boundary value problems which is formulated on the finite interval ($0\le x\le L$, where $L$ is a positive constant) for integrable equations whose Lax pairs involve $3\times 3$ matrices. Boundary value problems for integrable nonlinear evolution PDEs can be analyzed by the unified method introduced by Fokas and developed by him and his collaborators.
In this paper, we show that the solution can be expressed in terms of the solution of a $3\times 3$ Riemann-Hilbert problem. The relevant jump matrices are explicitly given in terms of the three matrix-value spectral functions $s(k)$,$S(k)$ and $S_L(k)$, which in turn are defined in terms of the initial values, boundary values at $x=0$ and boundary values at $x=L$, respectively. However, these spectral functions are not independent, they satisfy a global relation. Here, we show that the characterization of the unknown boundary values in terms of the given initial and boundary data is explicitly described for a nonlinear evolution PDE defined on the interval. Also, we show that in the limit when the length of the interval tends to infity, the relevant formulas reduce to the analogous formulas obtained for the case of boundary value problems formulated on the half-line.

\end{abstract}

\maketitle

\section{Introduction}

Integrable PDEs have the distinctive property that they can be written as the compatibility condition of two linear eigenvalue equations,
which are called a Lax pair \cite{lax}. An effective method (Inverse Scattering Transform (IST)) for solving the initial value problem
for integrable evolution equations on the line was discovered in 1967 \cite{ggkm}. However, the presence of a boundary presents new
challenges. It was realized in
\cite{f1} that the extension of this method to initial boundary value problems requires a
deeper understanding of the following question: What is the fundamental transform for
solving initial boundary value problems for linear evolution equations with $x$-derivatives
of arbitrary order? The investigation of this question has led to the discovery of
a general approach for solving boundary value problems for linear and for integrable
nonlinear PDEs \cite{f2} (see also \cite{f3,f4}). For integrable nonlinear evolution PDEs this approach is based on
the simultaneous spectral analysis of the two linear eigenvalue equations forming the
Lax pair, and on the investigation of the so-called global relation, which is an algebraic
relation coupling the relevant spectral functions.

\par
The Fokas method provides a generalization of the IST formalism from initial value to initial-boundary value (IBV) problems, and over the last eighteen years, this method has been used to analyze boundary value problems for several of the most important integrable equations with $2\times 2$ Lax pairs, such as the Korteweg-de Vries \cite{fi1}, the nonlinear Schr\"odinger \cite{fis}, the sine-Gordon equations \cite{fi2}, see \cite{fi3,fi4,abmfs1,abmfs2}. Just like the IST on the line, the unified method yields an expression for the solution of an IBV problem in terms of the solution of a Riemann-Hilbert problem. In particular, the asymptotic behavior of the solution can be analyzed in an effective way by using this Riemann-Hilbert problem and by employing the nonlinear version of the steepest descent method introduced by Deift and Zhou \cite{dz}.

\par
In 2012, Lenells first develops a methodology for analyzing IBV problems on the half-line for integrable evolution equations with Lax pairs involving $3\times 3$ matrices \cite{l1}. Although the transition from $2\times 2$ to $3\times 3$ matrix Lax pairs involves a number of novelties, the two
main steps of the method of \cite{f1,f4} remain the same:
\par
(1) Construct an integral representation of the solution characterized via a matrix
Riemann-Hilbert problem formulated on the complex $k$-sphere, where $k$ denotes the spectral parameter of the Lax pair. This representation involves,
in general, some unknown boundary values, thus the solution formula
is not yet effective.
\par
(2) Characterize the unknown boundary
values by analyzing the so-called global relation. In general, the
characterization of the unknown boundary values involves the solution
of a nonlinear problem.

\par
After Lenells' work, IBV problems on the half-line for other integrable evolution equations such as the Degasperis-Procesi \cite{l2}, Sasa-Satsuma \cite{jf1}, three wave \cite{jf2}, the two-component nonlinear Schr\"odinger \cite{jf3}, the Ostrovsky-Vakhnenko \cite{jf4} equations, are analyzed. However, within the knowledge of the authors, the IBV problems for integrable equations with $3\times 3$ matrices Lax pair on the finite interval has not been studied yet.

\par
The purpose of this paper is to extend the ideas from analyzing the IBV problems on the half-line to the finite interval for integrable evoultion equations with Lax pairs involving $3\times 3$ matrices. In fact, dealing with IBV problems on the interval has some difficulties. The implementation of step (1), we need four curve integration from the four corners of the $(x,t)-$domain. We will define analytic eigenfunctions, denoted by $\{M_n(x,t,k)\}$, via integral equations which involve integration from all {\em four} corners simultaneously. The most difficulties is to make a distinction between the integration contour $\gam_3$ and $\gam_4$ when we try to analyze the IBV problems on the interval. It is different from the analyzing the IBV problems on the half-line, because in the half-line case there just one integration curve $\gam_3$. Here, the constructions of this paper can be compared with the corresponding formalism for $2\times 2$-matrix Lax pairs introduced by Fokas and Its, see \cite{fi4}. The implementation of step (2), the differences are introducing a new factor $\frac{1}{\Dta}$ during analyzing the global relation to characterize the unknown boundary data in terms of the given initial and boundary data. We show that in the limit when the length of the interval tends to infity, the relevant formulas reduce to the analogous formulas obtained for the case of boundary value problems formulated on the half-line.
\par
\par
{\bf Organization of the paper:} In the next subsection, we introduce our main example considered in this paper. In section 2 we perform the spectral analysis of the associated Lax pair. We formulate the main Riemann-Hilbert problem in section 3 and this concludes
the implementation of step (1) above. We also get the map between the Dirichlet and the Neumann boundary problem through analyzing the global relation in section 4 and this concludes the implementation of step (2).

\subsection{The main example}
\par
In this paper, we will consider the two-component nonlinear Schr\"odinger equation or Manokov equation
\be\label{2-NLSe}
\left\{
\ba{l}
iq_{1t}+q_{1xx}-2\sig(|q_1|^2+|q_2|^2)q_1=0,\\
iq_{2t}+q_{2xx}-2\sig(|q_1|^2+|q_2|^2)q_2=0.
\ea
\right.\qquad \sig=\pm 1.
\ee
where $q_1(x,t)$ and $q_2(x,t)$ are complex-valued functions of $(x,t)\in \Omega$, with $\Omega$ denoting the finite interval domain
\be
\Omega=\{(x,t)|0\le x\le L, 0\le t\le T\},
\ee
here $L>0$ is a positive fixed constant and $T>0$ being a fixed final time. Here, $\sig=1$ means defocusing case and $\sig=-1$ means focusing case.
This system was first introduced by Manakov to describe the propagation of an optical
pulse in a birefringent optical fiber \cite{m74}. Subsequently, this system also arises
in the context of multicomponent Bose-Einstein condensates \cite{ba2001}.

\par
We will consider the following initial-boundary value problem for the 2-NLS equation,
\be\label{ibv-2nls}
\ba{lll}
\mbox{Initial value:}&q_{10}(x)=q_1(x,t=0),& q_{20}(x)=q_2(x,t=0),\\
\mbox{Dirichlet boundary value:}&g_{01}(t)=q_1(x=0,t),& g_{02}(t)=q_2(x=0,t),\\
&f_{01}(t)=q_1(x=L,t),& f_{02}(t)=q_2(x=L,t),\\
\mbox{Neumann boundary value:}&g_{11}(t)=q_{1x}(x=0,t),& g_{12}(t)=q_{2x}(x=0,t),\\
&f_{11}(t)=q_{1x}(x=L,t),& f_{12}(t)=q_{2x}(x=L,t).
\ea
\ee
It is well known that 2-NLS equation admits a $3\times 3$ Lax pair,
\begin{subequations}\label{Laxpair}
\be\label{Lax-x}
\Psi_x=U\Psi,\quad \Psi=\left(\ba{c}\Psi_1\\\Psi_2\\\Psi_3\ea\right).
\ee
\be\label{Lax-t}
\Psi_t=V\Psi.
\ee
\end{subequations}
where
\be\label{Udef}
U=ik\Lam+V_1.
\ee
and
\be\label{Vdef}
V=2ik^2\Lam+V_2
\ee
here
\be\label{Lamdef}
\Lam=\left(\ba{ccc}-1&0&0\\0&1&0\\0&0&1\ea\right),V_1=\left(\ba{ccc}0&q_1&q_2\\ \sig\bar q_1&0&0\\ \sig\bar q_2&0&0\ea\right),V_2=2kV_2^{(1)}+V_2^{(0)}.
\ee
where
\be
V_2^{(1)}=V_1,\qquad
V_2^{(0)}=i\Lam (V^2_1-V_{1x}).
\ee

\section{Spectral analysis}

\subsection{The closed one-form}

Introducing a new eigenfunction $\mu(x,t,k)$ by
\be\label{neweigfun}
\Psi=\mu e^{i\Lam kx+2i\Lam k^2t}
\ee
then we find the Lax pair equations
\be\label{muLaxe}
\left\{
\ba{l}
\mu_x-[ik\Lam,\mu]=V_1\mu,\\
\mu_t-[2ik^2\Lam,\mu]=V_2\mu.
\ea
\right.
\ee
Letting $\hat A$ denotes the operators which acts on a $3\times 3$ matrix $X$ by $\hat A X=[A,X]$ , then the equations in (\ref{muLaxe}) can be written in differential form as
\be\label{mudiffform}
d(e^{-(ikx+2ik^2t)\hat \Lam}\mu)=W,
\ee
where $W(x,t,k)$ is the closed one-form defined by
\be\label{Wdef}
W=e^{-(ikx+2ik^2t)\hat \Lam}(V_1dx+V_2dt)\mu.
\ee

\subsection{The $\mu_j$'s definition}
We define four eigenfunctions $\{\mu_j\}_1^4$ of (\ref{muLaxe}) by the Volterra integral equations
\be\label{mujdef}
\mu_j(x,t,k)=\id+\int_{\gam_j}e^{(i kx+2i k^2t)\hat \Lam}W_j(x',t',k).\qquad j=1,2,3,4.
\ee
where $W_j$ is given by (\ref{Wdef}) with $\mu$ replaced with $\mu_j$, and the contours $\{\gam_j\}_1^4$ are showed in Figure 1.
\begin{figure}[th]
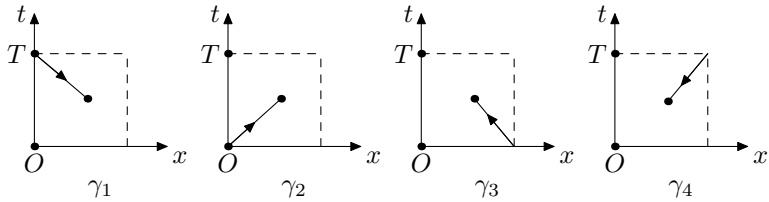

\centering
\includegraphics{2-NLS-INT.1}
\includegraphics{2-NLS-INT.2}
\includegraphics{2-NLS-INT.3}
\includegraphics{2-NLS-INT.4}
\caption{The three contours $\gam_1,\gam_2,\gam_3$ and $\gam_4$ in the $(x,t)-$domain.}
\end{figure}
The first, second and third column of the matrix equation (\ref{mujdef}) involves the exponentials
\be
\ba{ll}
\mbox{$[\mu_j]_1$:}&e^{2ik(x-x')+4ik^2(t-t')},e^{2ik(x-x')+4ik^2(t-t')}\\
\mbox{$[\mu_j]_2$:}&e^{-2ik(x-x')-4ik^2(t-t')},\\
\mbox{$[\mu_j]_3$:}&e^{-2ik(x-x')-4ik^2(t-t')}.
\ea
\ee
And we have the following inequalities on the contours:
\be
\ba{ll}
\gam_1:&x-x'\ge 0,t-t'\le 0,\\
\gam_2:&x-x'\ge 0,t-t'\ge 0,\\
\gam_3:&x-x'\le 0,t-t'\ge 0,\\
\gam_4:&x-x'\le 0,t-t'\le 0.
\ea
\ee
So, these inequalities imply that the functions $\{\mu_j\}_1^4$ are bounded and analytic for $k\in\C$ such that $k$ belongs to
\be\label{mujbodanydom}
\ba{ll}
\mu_1:&(D_2,D_3,D_3),\\
\mu_2:&(D_1,D_4,D_4),\\
\mu_3:&(D_3,D_2,D_2),\\
\mu_4:&(D_4,D_1,D_1).
\ea
\ee
where $\{D_n\}_1^4$ denote four open, pairwisely disjoint subsets of the complex $k-$sphere showed in Figure 2.
\begin{figure}[th]
\centering
\includegraphics{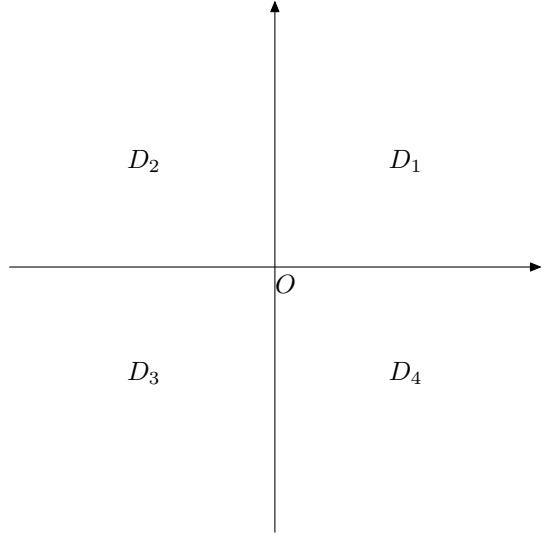}
\caption{The sets $D_n$, $n=1,\ldots ,4$, which decompose the complex $k-$plane.}\label{fig2}
\end{figure}
\par
We also notice that the sets $\{D_n\}_1^4$ has the following properties:
\[
\ba{l}
D_1=\{k\in\C|\re{l_1}>\re{l_2}=\re{l_3},\re{z_1}>\re{z_2}=\re{z_3}\},\\
D_2=\{k\in\C|\re{l_1}>\re{l_2}=\re{l_3},\re{z_1}<\re{z_2}=\re{z_3}\},\\
D_3=\{k\in\C|\re{l_1}<\re{l_2}=\re{l_3},\re{z_1}>\re{z_2}=\re{z_3}\},\\
D_4=\{k\in\C|\re{l_1}<\re{l_2}<\re{l_3},\re{z_1}<\re{z_2}=\re{z_3}\},\\
\ea
\]
where $l_i(k)$ and $z_i(k)$ are the diagonal entries of matrices $ik\Lam$ and $2ik^2\Lam$, respectively.
\par

\subsection{The $M_n$'s definition}
For each $n=1,\ldots,4$, define a solution $M_n(x,t,k)$ of (\ref{muLaxe}) by the following system of integral equations:
\be\label{Mndef}
(M_n)_{ij}(x,t,k)=\dta_{ij}+\int_{\gam_{ij}^n}(e^{(i kx+2i k^2t)\hat \Lam}W_n(x',t',k))_{ij},\quad k\in D_n,\quad i,j=1,2,3.
\ee
where $W_n$ is given by (\ref{Wdef}) with $\mu$ replaced with $M_n$, and the contours $\gam_{ij}^n$, $n=1,\ldots,4$, $i,j=1,2,3$ are defined by
\be\label{gamijndef}
\gam_{ij}^n=\left\{
\ba{lclcl}
\gam_1&if&\re l_i(k)<\re l_j(k)&and&\re z_i(k)\ge\re z_j(k),\\
\gam_2&if&\re l_i(k)<\re l_j(k)&and&\re z_i(k)<\re z_j(k),\\
\gam_3&if&\re l_i(k)\ge\re l_j(k)&and&\re z_i(k)\le \re z_j(k),\\
\gam_4&if&\re l_i(k)\ge\re l_j(k)&and&\re z_i(k)\ge \re z_j(k).
\ea
\right.
\quad \mbox{for }\quad k\in D_n.
\ee
Here, we make a distinction between the contours $\gam_3$ and $\gam_4$ as follows,
\be
\gam^{n}_{ij}=\left\{
\ba{lcl}
\gam_3,&if&\prod_{1\le i<j\le 3} (\re l_i(k)-\re l_j(k))(\re z_i(k)-\re z_j(k))<0,\\
\gam_4,&if&\prod_{1\le i<j\le 3} (\re l_i(k)-\re l_j(k))(\re z_i(k)-\re z_j(k))>0.
\ea
\right.
\ee
The rule chosen in the produce is if $l_m=l_n$, $m$ may not equals $n$, we just choose the subscript is smaller one.
\par
According to the definition of the $\gam^n$, we find that
\be\label{gamndef}
\ba{ll}
\gam^1=\left(\ba{lll}\gam_4&\gam_4&\gam_4\\\gam_2&\gam_4&\gam_4\\\gam_2&\gam_4&\gam_4\ea\right)&
\gam^2=\left(\ba{lll}\gam_3&\gam_3&\gam_3\\\gam_1&\gam_3&\gam_3\\\gam_1&\gam_3&\gam_3\ea\right)\\
\gam^3=\left(\ba{lll}\gam_3&\gam_1&\gam_1\\\gam_3&\gam_3&\gam_3\\\gam_3&\gam_3&\gam_3\ea\right)&
\gam^4=\left(\ba{lll}\gam_4&\gam_2&\gam_2\\\gam_4&\gam_4&\gam_4\\\gam_4&\gam_4&\gam_4\ea\right).
\ea
\ee
\par
The following proposition ascertains that the $M_n$'s defined in this way have the properties required for the formulation of a Riemann-Hilbert problem.
\begin{proposition}
For each $n=1,\ldots,4$, the function $M_n(x,t,k)$ is well-defined by equation (\ref{Mndef}) for $k\in \bar D_n$ and $(x,t)\in \Om$. For any fixed point $(x,t)$, $M_n$ is bounded and analytic as a function of $k\in D_n$ away from a possible discrete set of singularities $\{k_j\}$ at which the Fredholm determinant vanishes. Moreover, $M_n$ admits a bounded and contious extension to $\bar D_n$ and
\be\label{Mnasy}
M_n(x,t,k)=\id+O(\frac{1}{k}),\qquad k\rightarrow \infty,\quad k\in D_n.
\ee
\end{proposition}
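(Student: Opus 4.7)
The plan is to treat (\ref{Mndef}) as a linear integral equation for the matrix $M_n$ and to analyze it by the standard Volterra/Fredholm framework. The first observation, and the crux of the whole construction, is that the specific assignment of contours $\gam_{ij}^n$ in (\ref{gamijndef}) is designed precisely so that the exponential factors in the kernel are bounded. Indeed, the $(i,j)$ entry of $e^{(ikx+2ik^2t)\hat\Lam}W_n(x',t',k)$ carries the factor $e^{(l_i(k)-l_j(k))(x-x')+(z_i(k)-z_j(k))(t-t')}$, and a case-by-case inspection of the four choices in (\ref{gamijndef}) together with the sign conditions on $x-x'$ and $t-t'$ along each $\gam_m$ shows that the real part of the exponent is non-positive for every $k\in\bar D_n$ and every $(x',t')\in\gam_{ij}^n$. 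The first step is to record this uniform bound explicitly, being careful that the extra tie-breaking rule between $\gam_3$ and $\gam_4$ is consistent with the definitions of the $D_n$.

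With the uniform exponential bound in hand, the second step is to solve (\ref{Mndef}) by the Neumann iteration $M_n=\sum_{p\ge 0}M_n^{(p)}$, where $M_n^{(0)}=\id$ and $M_n^{(p+1)}$ is obtained by inserting $M_n^{(p)}$ into the right-hand side of (\ref{Mndef}). Since $V_1$ and $V_2$ are continuous on the compact rectangle $\Om$, and since $V_2$ depends polynomially on $k$, the kernel is bounded entry-wise by a $(x,t)$-dependent constant times an exponential whose real part is non-positive. Reducing each integration along $\gam_{ij}^n$ to its $x$- or $t$-monotone pieces, one obtains the standard Volterra-type estimate $\|M_n^{(p)}\|\le C(k)^p/p!$, so the series converges absolutely and uniformly for $(x,t)\in\Om$ and for $k$ in any compact subset of $D_n$ away from the zeros of the associated Fredholm determinant. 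These zeros form at most a discrete subset $\{k_j\}\subset D_n$.

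Analyticity and continuous extension to $\bar D_n$ then follow from the construction: within each $D_n$ the signs appearing in (\ref{gamijndef}) are constant, so the contours $\gam_{ij}^n$ do not vary with $k$; each iterate $M_n^{(p)}$ is therefore analytic in $k\in D_n$, and the limit inherits analyticity by the uniform convergence. The continuous extension to $\bar D_n$ is obtained by dominated convergence, using that the exponential bound degenerates in a controlled manner on the boundary curves where two eigenvalues $l_i, l_j$ or $z_i, z_j$ coincide.

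The asymptotic formula (\ref{Mnasy}) is obtained by inserting the identity $V_1=[\Lam, \cdot]$-decomposition and performing a single integration by parts in the $x$- or $t$-variable in (\ref{Mndef}): the boundary contribution at $(x',t')=(x,t)$ produces the leading $\id$, while the remaining terms carry an explicit factor of $k^{-1}$ from integrating the oscillatory exponential; the uniform bounds established in the first step control the error uniformly in $(x,t)\in\Om$. The main obstacle in the whole argument is the bookkeeping in the first step, where one has to verify that the piecewise definition (\ref{gamijndef}), together with the tie-breaking convention between $\gam_3$ and $\gam_4$, simultaneously makes every one of the nine entries of the kernel bounded on a single region $D_n$; once this combinatorial check is in place, the remaining steps are fairly standard.
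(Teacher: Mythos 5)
Your overall strategy (uniform exponential bounds coming from the contour assignment, then an iteration argument, then asymptotics) is reasonable in outline, and your first step --- checking that (\ref{gamijndef}) together with the tie-breaking rule makes every exponential factor bounded on $\bar D_n$ --- is indeed the essential combinatorial point. However, there is a genuine gap in your second step. The system (\ref{Mndef}) is \emph{not} of Volterra type: different entries of $M_n$ are integrated from different corners of $\Om$ (for instance, for $n=1$ the $(2,1)$ and $(3,1)$ entries are integrated from $(0,0)$ along $\gam_2$ while the remaining entries are integrated from $(L,T)$ along $\gam_4$), so the iterated integrals in the Neumann series do not confine themselves to a shrinking simplex and the estimate $\|M_n^{(p)}\|\le C(k)^p/p!$ fails. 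Indeed, if that estimate held, the series would converge for every $k\in D_n$ and $M_n$ would have no singularities at all --- contradicting the very statement you are proving, which allows a discrete set $\{k_j\}$ where the Fredholm determinant vanishes, and contradicting the residue conditions of Proposition \ref{propos} that the paper needs later. The correct argument (which the paper simply quotes from Appendix B of Lenells \cite{l1}) is to rewrite (\ref{Mndef}) as a Fredholm integral equation of the second kind with bounded kernel and invoke the analytic Fredholm alternative: the solution exists and is analytic in $k\in D_n$ away from the discrete zero set of the Fredholm determinant. Your passing mention of that determinant does not repair the argument, because it sits alongside a convergence claim that would render the determinant irrelevant.

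On the asymptotics, you take a different route from the paper: the paper substitutes a formal expansion $M=M_0+M^{(1)}/k+M^{(2)}/k^2+\cdots$ into the differential form of the Lax pair (\ref{muLaxe}) and matches powers of $k$, whereas you integrate by parts in the integral equation. Either can be made rigorous, but note that the diagonal entries of the kernel carry no oscillatory exponential, so integration by parts only yields the $O(1/k)$ gain for the off-diagonal entries; the diagonal ones require a separate bootstrap using the off-diagonal structure of $V_1$. As written, your final step therefore also needs to be completed.
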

\begin{proof}
The bounedness and analyticity properties are established in appendix B in \cite{l1}. And substituting the expansion
\[
M=M_0+\frac{M^{(1)}}{k}+\frac{M^{(2)}}{k^2}+\cdots,\qquad k\rightarrow \infty.
\]
into the Lax pair (\ref{muLaxe}) and comparing the terms of the same order of $k$ yield the equation (\ref{Mnasy}).
\end{proof}

\subsection{The jump matrices}

We define matrix-value functions $S_n(k)$, $n=1,\ldots,4$, and
\be\label{Sndef}
S_n(k)=M_n(0,0,k),\qquad k\in D_n,\quad n=1,\ldots,4.
\ee
Let $M$ denote the sectionally analytic function on the complex $k-$sphere which equals $M_n$ for $k\in D_n$. Then $M$ satisfies the jump conditions
\be\label{Mjump}
M_n=M_mJ_{m,n},\qquad k\in \bar D_n\cap \bar D_m,\qquad n,m=1,\ldots,4,\quad n\ne m,
\ee
where the jump matrices $J_{m,n}(x,t,k)$ are defined by
\be\label{Jmndef}
J_{m,n}=e^{(i kx+2i k^2t)\hat \Lam}(S_m^{-1}S_n).
\ee

\subsection{The adjugated eigenfunctions}
We will also need the analyticity and boundedness properties of the minors of the matrices $\{\mu_j(x,t,k)\}_1^4$. We recall that the cofactor matrix $X^A$ of a $3\times 3$ matrix $X$ is defined by
\[
X^A=\left(
\ba{ccc}
m_{11}(X)&-m_{12}(X)&m_{13}(X)\\
-m_{21}(X)&m_{22}(X)&-m_{23}(X)\\
m_{31}(X)&-m_{32}(X)&m_{33}(X)
\ea
\right),
\]
where $m_{ij}(X)$ denote the $(ij)$th minor of $X$.
\par
It follows from (\ref{muLaxe}) that the adjugated eigenfunction $\mu^A$ satisfies the Lax pair
\be\label{muadgLaxe}
\left\{
\ba{l}
\mu_x^A+[ik\Lam,\mu^A]=-V_1^T\mu^A,\\
\mu_t^A+[2ik^2\Lam,\mu^A]=-V_2^T\mu^A.
\ea
\right.
\ee
where $V^T$ denote the transform of a matrix $V$.
Thus, the eigenfunctions $\{\mu_j^A\}_1^4$ are solutions of the integral equations
\be\label{muadgdef}
\mu_j^A(x,t,k)=\id-\int_{\gam_j}e^{-ik(x-x')-2ik^2(t-t')\hat \Lam}(V_1^Tdx+V_2^T)\mu^A,\quad j=1,2,3,4.
\ee
Then we can get the following analyticity and boundedness properties:
\be\label{mujadgbodanydom}
\ba{ll}
\mu_1^A:&(D_3,D_2,D_2),\\
\mu_2^A:&(D_4,D_1,D_1),\\
\mu_3^A:&(D_2,D_3,D_3),\\
\mu_3^A:&(D_1,D_4,D_4).
\ea
\ee

\subsection{Symmetries}
We will show that the eigenfunctions $\mu_j(x,t,k)$ satisfy an important symmetry.
\begin{lemma}\label{symmetry}
The eigenfunction $\Psi(x,t,k)$ of the Lax pair (\ref{Laxpair}) satisfies the following symmetry:
\be\label{symm}
\Psi^{-1}(x,t,k)=A\ol{\Psi(x,t,\bar k)}^{T}A,
\ee
where
\be
A=\left(
\ba{ccc}
1&0&0\\
0&-\sig&0\\
0&0&-\sig
\ea
\right),\quad \sig^2=1.
\ee
Here, the superscript $T$ denotes a matrix transpose.
\end{lemma}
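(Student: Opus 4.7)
The natural strategy is to show that the two $3\times 3$ matrix functions
\[
\Psi^{-1}(x,t,k) \qquad\text{and}\qquad \Phi(x,t,k):=A\,\overline{\Psi(x,t,\bar k)}^{\,T}A
\]
solve the same pair of first-order linear matrix ODEs in $(x,t)$, and agree at one reference point. Since $\sig^2=1$, the matrix $A$ is its own inverse ($A^2=\id$), and $A$ commutes with $\Lam$ (both diagonal). Differentiating $\Psi\Psi^{-1}=\id$ gives $(\Psi^{-1})_x=-\Psi^{-1}U$ and $(\Psi^{-1})_t=-\Psi^{-1}V$. On the other hand, using $\Psi_x(\bar k)=U(\bar k)\Psi(\bar k)$ and inserting $A^2=\id$,
\[
\Phi_x = A\,\overline{\Psi(x,t,\bar k)}^{\,T}\,\overline{U(x,t,\bar k)}^{\,T} A = \Phi\cdot\bigl(A\,\overline{U(x,t,\bar k)}^{\,T}A\bigr),
\]
and similarly for $\Phi_t$. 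Thus the lemma reduces to verifying the two algebraic identities
\be\label{prop:keyid}
A\,\overline{U(x,t,\bar k)}^{\,T}A=-U,\qquad A\,\overline{V(x,t,\bar k)}^{\,T}A=-V.
\ee

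The first identity in \eqref{prop:keyid} is a short direct calculation: $\overline{ik\Lam}\big|_{k\to \bar k}=-ik\Lam$, and a $3\times 3$ block computation using the explicit form of $V_1$ shows that $A\,\overline{V_1}^{\,T}A=-V_1$, where the factors of $-\sig$ in the lower block of $A$ precisely balance the factor of $\sig$ in the off-diagonal entries of $V_1$. For the second identity, writing $V=2ik^2\Lam+2kV_1+V_2^{(0)}$ with $V_2^{(0)}=i\Lam(V_1^2-V_{1x})$, the terms $2ik^2\Lam$ and $2kV_1$ are handled as above; the remaining task is the $k$-independent part, namely $A\,\overline{V_2^{(0)}}^{\,T}A=-V_2^{(0)}$. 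The computation gives
\[
A\,\overline{V_2^{(0)}}^{\,T}A=-i\bigl(V_1^2+V_{1x}\bigr)\Lam,
\]
so the identity reduces to showing $(V_1^2+V_{1x})\Lam=\Lam(V_1^2-V_{1x})$. This is where the structural observation enters: inspection of $V_1$ shows $\{\Lam,V_1\}=0$, hence $V_1\Lam=-\Lam V_1$, which immediately gives $[V_1^2,\Lam]=0$ and $\{V_{1x},\Lam\}=0$. The desired identity then follows, so \eqref{prop:keyid} holds.

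Combining these, both $\Psi^{-1}$ and $\Phi$ satisfy $Y_x=-YU$ and $Y_t=-YV$. Therefore $(\Phi\Psi)_x=(\Phi\Psi)_t=0$, so $\Phi\Psi$ is a constant matrix $C(k)$. To see that $C(k)=\id$, evaluate at any reference point $(x_\ast,t_\ast)$ where $\Psi$ is normalized (e.g.\ the corner of a contour $\gam_j$ used to define $\mu_j$, where $\mu_j=\id$ and hence $\Psi=e^{i\Lam kx_\ast+2i\Lam k^2 t_\ast}$): since $\Lam$ is real and commutes with $A$, a direct substitution yields $\Phi(x_\ast,t_\ast,k)=\Psi^{-1}(x_\ast,t_\ast,k)$, hence $C(k)=\id$ and the symmetry \eqref{symm} follows.

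The only non-routine step is the verification of the second identity in \eqref{prop:keyid}; in particular, recognising and using the anticommutation $\{\Lam,V_1\}=0$ to reconcile the ordering of $\Lam$ and $V_1^2\pm V_{1x}$ on either side. Everything else is bookkeeping: the transpose-bar-conjugation by $A$ simply encodes the symplectic/unitary structure compatible with the Manakov reduction.
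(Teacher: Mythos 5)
Your proof is correct and follows essentially the same route as the paper: both rest on the conjugation identities for $U$ and $V$ (your $A\,\overline{U(x,t,\bar k)}^{T}A=-U$ is the transpose of the paper's $-A\overline{U(x,t,\bar k)}A=U^{T}$) combined with the fact that $\Psi^{-1}$ (the paper uses the adjugate $\Psi^{A}$) satisfies the transposed-negated Lax equations, so the two sides agree as solutions of the same linear system with the same normalization. Your write-up is simply more explicit, in particular in verifying the $V_2^{(0)}$ term via the anticommutation $\{\Lam,V_1\}=0$ and in pinning down the constant $C(k)=\id$, which the paper leaves implicit.
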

\begin{proof}
The equation (\ref{symm}) follows from the fact
\be
-A\ol{U(x,t,\bar k)}A=U(x,t,k)^{T},\quad -A\ol{V(x,t,\bar k)}A=V(x,t,k)^{T},
\ee
and
\be
\Psi_x^A(x,t,k)=-U(x,t,k)^T\Psi^A(x,t,k),\quad \Psi_t^A(x,t,k)=-V(x,t,k)^T\Psi^A(x,t,k)
\ee
\end{proof}

\begin{remark}
This lemma implies that the eigenfunctions $\mu_j(x,t,k)$ of Lax pair (\ref{muLaxe}) satisfy the same symmetry.
\end{remark}
\subsection{The $J_{m,n}$'s computation}

Let us define the $3\times 3-$matrix value spectral functions $s(k)$, $S(k)$ and $S_L(k)$ by
\begin{subequations}\label{sSdef}
\be\label{mu3mu2s}
\mu_3(x,t,k)=\mu_2(x,t,k)e^{(ikx+2ik^2t)\hat \Lam}s(k),
\ee
\be\label{mu1mu2S}
\mu_1(x,t,k)=\mu_2(x,t,k)e^{(ikx+2ik^2t)\hat \Lam}S(k),
\ee
\be\label{mu4mu3SL}
\mu_4(x,t,k)=\mu_3(x,t,k)e^{(ik(x-L)+2ik^2t)\hat \Lam}S_L(k)
\ee
\end{subequations}
Thus,
\begin{subequations}
\be\label{smu3}
s(k)=\mu_3(0,0,k),
\ee
\be\label{Smu1}
S(k)=\mu_1(0,0,k)=e^{-2ik^2 T\hat \Lam}\mu_2^{-1}(0,T,k),
\ee
\be\label{SLmu4}
S_L(k)=\mu_4(L,0,k)=e^{-2ik^2 T\hat \Lam}\mu_3^{-1}(L,T,k).
\ee
\end{subequations}
And we deduce from the properties of $\mu_j$ and $\mu_j^A$ that $\{s(k),S(k),S_L(k)\}$ and $\{s^A(k),S^A(k),S^A_L(k)\}$ have the following boundedness properties:
\[
\ba{ll}
s(k):&(D_3\cup D_4,D_1\cup D_2,D_1\cup D_2),\\
S(k):&(D_2\cup D_4,D_1\cup D_3,D_1\cup D_3),\\
S_L(k):&(D_2\cup D_4,D_1\cup D_3,D_1\cup D_3)\\
s^A(k):&(D_1\cup D_2,D_3\cup D_4,D_3\cup D_4),\\
S^A(k):&(D_1\cup D_3,D_2\cup D_4,D_2\cup D_4),\\
S_L^A(k):&(D_1\cup D_3,D_2\cup D_4,D_2\cup D_4).
\ea
\]
Moreover, noticing that
\be\label{MnSnrel}
M_n(x,t,k)=\mu_2(x,t,k)e^{(ikx+2ik^2t)\hat\Lam}S_n(k),\quad k\in D_n.
\ee
\begin{proposition}
The $S_n$ can be expressed in terms of the entries of $s(k),S(k)$ and $S_L(k)$ as follows:
\begin{subequations}\label{Sn}
\be
\ba{ll}
S_1=\left(\ba{ccc}\frac{1}{m_{11}(\mathcal{A})}&\mathcal{A}_{12}&\mathcal{A}_{13}\\0&\mathcal{A}_{22}
&\mathcal{A}_{23}\\0&\mathcal{A}_{32}&\mathcal{A}_{33}\ea\right),&
S_2=\left(\ba{ccc}\frac{S_{11}}{(S^Ts^A)_{11}}&s_{12}&s_{13}\\\frac{S_{21}}{(S^Ts^A)_{11}}&s_{22}
&s_{23}\\\frac{S_{31}}{(S^Ts^A)_{11}}&s_{32}&s_{33}\ea\right),\\
\ea
\ee
\be
\ba{l}
S_3=\left(\ba{ccc}s_{11}&\frac{m_{33}(s)m_{21}(S)-m_{23}(s)m_{31}(S)}{(s^TS^A)_{11}}&\frac{m_{32}(s)m_{21}(S)-m_{22}(s)m_{31}(S)}{(s^TS^A)_{11}}\\
s_{21}&\frac{m_{33}(s)m_{11}(S)-m_{13}(s)m_{31}(S)}{(s^TS^A)_{11}}&\frac{m_{32}(s)m_{11}(S)-m_{12}(s)m_{31}(S)}{(s^TS^A)_{11}}\\
s_{31}&\frac{m_{23}(s)m_{11}(S)-m_{13}(s)m_{21}(S)}{(s^TS^A)_{11}}&\frac{m_{22}(s)m_{11}(S)-m_{12}(s)m_{21}(S)}{(s^TS^A)_{11}}\ea\right),\\
S_4=\left(\ba{ccc}\mathcal{A}_{11}&0&0\\\mathcal{A}_{21}&\frac{m_{33}(\mathcal{A})}{\mathcal{A}_{11}}&\frac{m_{32}(\mathcal{A})}{\mathcal{A}_{11}}\\\mathcal{A}_{31}&\frac{m_{23}(\mathcal{A})}{\mathcal{A}_{11}}&\frac{m_{22}(\mathcal{A})}{\mathcal{A}_{11}}\ea\right).
\ea
\ee
\end{subequations}
where $\mathcal{A}=(\mathcal{A}_{ij})_{i,j=1}^{3}$ is a $3\times 3$ matrix, which is defined as
$
\mathcal{A}=s(k)e^{-ikL\hat \Lam}S_L(k).
$
And the functions
\[
(S^Ts^A)_{11}=S_{11}m_{11}(s)-S_{21}m_{21}(s)+S_{31}m_{31}(s),
\]
\[
(s^TS^A)_{11}=s_{11}m_{11}(S)-s_{21}m_{21}(S)+s_{31}m_{31}(S).
\]
\end{proposition}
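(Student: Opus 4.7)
The plan is to reduce the computation of each column of $S_n$ to a $3\times 3$ linear system for its entries, solve it by Cramer's rule, and simplify the output using Jacobi-type identities for $2\times 2$ minors of a unimodular $3\times 3$ matrix. The preliminary step is to upgrade (\ref{mu3mu2s})--(\ref{mu4mu3SL}) to the three matrix identities
\[
\mu_1=\mu_2\,e^{(ikx+2ik^2t)\hat\Lam}S,\qquad \mu_3=\mu_2\,e^{(ikx+2ik^2t)\hat\Lam}s,\qquad \mu_4=\mu_2\,e^{(ikx+2ik^2t)\hat\Lam}\mathcal{A},
\]
the third following from (\ref{mu3mu2s}) and (\ref{mu4mu3SL}) via the multiplicativity $e^{\alpha\hat\Lam}(XY)=(e^{\alpha\hat\Lam}X)(e^{\alpha\hat\Lam}Y)$. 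Combined with (\ref{MnSnrel}), these give, for each $p\in\{1,2,3,4\}$,
\[
M_n(x,t,k)=\mu_p(x,t,k)\,e^{(ikx+2ik^2t)\hat\Lam}\bigl(\Sig_p^{-1}S_n(k)\bigr),\qquad (\Sig_1,\Sig_2,\Sig_3,\Sig_4)=(S,\id,s,\mathcal{A}).
\]
Since $\mathrm{tr}\,V_1=\mathrm{tr}\,V_2=0$ and each $\mu_j$ reduces to $\id$ at the starting corner of $\gam_j$, one has $\det\mu_j\equiv 1$, whence $\det s=\det S=\det S_L=\det\mathcal{A}=1$; this unimodularity is essential for the minor calculus below.

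The key observation is that the Volterra integral defining $(M_n)_{ij}$ along $\gam_{ij}^n=\gam_p$ collapses at the starting corner $(x_p^\ast,t_p^\ast)$ of $\gam_p$, where $\mu_p=\id$ also; here $(x_p^\ast,t_p^\ast)=(0,T),(0,0),(L,0),(L,T)$ for $p=1,2,3,4$ respectively. Evaluating the displayed identity at $(x_p^\ast,t_p^\ast)$ and cancelling the nonvanishing diagonal exponential yields the scalar relation
\[
(\Sig_p^{-1}S_n)_{ij}=\delta_{ij}\qquad\text{whenever }\gam_{ij}^n=\gam_p.
\]
Scanning column $j$ of the matrix $\gam^n$ in (\ref{gamndef}) produces exactly three such scalar equations -- one per row $i$ -- forming a $3\times 3$ linear system for $\bigl((S_n)_{1j},(S_n)_{2j},(S_n)_{3j}\bigr)^{T}$. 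When all three rows of column $j$ share a single contour $\gam_p$, the system decouples and reads off the $j$-th column of $\Sig_p$; this immediately recovers the ``easy'' columns of the proposition: columns $2,3$ of $S_1$ (columns of $\mathcal{A}$), columns $2,3$ of $S_2$ (columns of $s$), column $1$ of $S_3$ (column 1 of $s$), and column $1$ of $S_4$ (column 1 of $\mathcal{A}$).

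For the mixed columns I would invoke Cramer's rule on the remaining $3\times 3$ systems and then simplify. In $S_1$ and $S_4$ some of the row-contours are $\gam_2$ (with $\Sig_2=\id$), which force the corresponding entries of $S_n$ to vanish directly, reducing the system to at most a $2\times 2$ inversion of a principal block of $\mathcal{A}^{-1}$; after expanding $(\mathcal{A}^{-1})_{ab}=(-1)^{a+b}m_{ba}(\mathcal{A})$ and invoking the Jacobi identity $m_{ii}(X)m_{jj}(X)-m_{ij}(X)m_{ji}(X)=X_{kk}\det X$ (valid for any $3\times 3$ $X$ and $\{i,j,k\}=\{1,2,3\}$), the factors $1/m_{11}(\mathcal{A})$ and $m_{ij}(\mathcal{A})/\mathcal{A}_{11}$ of $S_1,S_4$ emerge. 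For $S_2$ and $S_3$ the three mixed equations genuinely couple all three unknowns -- two rows coming from $\Sig_p^{-1}$ for one $p$ and the third from another $\Sig_{p'}^{-1}$ -- and after the same cofactor expansion together with the non-principal analogue $m_{12}(X)m_{33}(X)-m_{13}(X)m_{32}(X)=X_{21}\det X$ and its permutations, the common Cramer denominator is recognised as $(S^Ts^A)_{11}$ for $S_2$ and $(s^TS^A)_{11}$ for $S_3$, while the numerators collapse into the displayed combinations of $2\times 2$ minors of $s$ and $S$.

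The main obstacle is algebraic bookkeeping rather than any hard analytic step: for each $(n,j)$ one must correctly read off the contour assignments from (\ref{gamndef}), assemble the proper $3\times 3$ system, and then match the Cramer output to the compact form of the proposition via the right Jacobi-type identity. Once the Cramer denominator is identified with the scalar $\sum_{l=1}^{3}(-1)^{l+1}s_{l1}m_{l1}(S)=(s^TS^A)_{11}$ (or its analogue for $S_2$), the remaining simplifications reduce to routine cofactor manipulations.
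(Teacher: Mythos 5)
Your proposal is correct and follows essentially the same route as the paper: your matrices $\Sigma_p^{-1}S_n$ are exactly the paper's $R_n,S_n,T_n,Q_n$ (the normalized corner values of $M_n$), and your corner conditions $(\Sigma_p^{-1}S_n)_{ij}=\delta_{ij}$ whenever $\gamma_{ij}^n=\gamma_p$ reproduce the paper's algebraic system $s=S_nT_n^{-1}$, $S=S_nR_n^{-1}$, $\mathcal{A}=S_nQ_n^{-1}$ together with the same vanishing/identity constraints read off from (\ref{gamndef}). The only difference is that you spell out the column-by-column Cramer/Jacobi-identity computation that the paper compresses into ``by computing the explicit solution of this algebraic system, we arrive at (\ref{Sn})''.
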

\begin{proof}
Firstly, we define $R_n(k),T_n(k)$ and $Q_n(k)$ as follows:
\begin{subequations}\label{RnTnQn}
\be\label{Rn}
R_n(k)=e^{-2ik^2T\hat \Lam}M_n(0,T,k),
\ee
\be\label{Tn}
T_n(k)=e^{-ikL\hat \Lam}M_n(L,0,k),
\ee
\be\label{Qn}
Q_n(k)=e^{-(ikL+2ik^2T)\hat\Lam}M_n(L,T,k).
\ee
\end{subequations}
Then, we have the following relations:
\be\label{MnRnSnTn}
\left\{
\ba{l}
M_n(x,t,k)=\mu_1(x,t,k)e^{(ikx+2ik^2t)\hat \Lam}R_n(k),\\
M_n(x,t,k)=\mu_2(x,t,k)e^{(ikx+2ik^2t)\hat \Lam}S_n(k),\\
M_n(x,t,k)=\mu_3(x,t,k)e^{(ikx+2ik^2t)\hat \Lam}T_n(k),\\
M_n(x,t,k)=\mu_3(x,t,k)e^{(ikx+2ik^2t)\hat \Lam}Q_n(k)
\ea
\right.
\ee

The relations (\ref{MnRnSnTn}) imply that
\be\label{sSRnSnTn}
\ba{l}
s(k)=S_n(k)T^{-1}_n(k),\\
S(k)=S_n(k)R^{-1}_n(k),\\
\mathcal{A}(k)=S_n(k)Q^{-1}_n(k).
\ea
\ee
These equations constitute a matrix factorization problem which, given $\{s(k),S(k),S_L(k)\}$ can be solved for the $\{R_n,S_n,T_n,Q_n\}$. Indeed, the integral equations (\ref{Mndef}) together with the definitions of $\{R_n,S_n,T_n,Q_n\}$ imply that
\be
\left\{
\ba{lll}
(R_n(k))_{ij}=0&if&\gam_{ij}^n=\gam_1,\\
(S_n(k))_{ij}=0&if&\gam_{ij}^n=\gam_2,\\
(T_n(k))_{ij}=\dta_{ij}&if&\gam_{ij}^n=\gam_3,\\
(Q_n(k))_{ij}=\dta_{ij}&if&\gam_{ij}^n=\gam_4.
\ea
\right.
\ee
It follows that (\ref{sSRnSnTn}) are 27 scalar equations for 27 unknowns. By computing the explicit solution of this algebraic system, we arrive at (\ref{Sn}).
\end{proof}

\begin{remark}
Due to our symmetry, see Lemma \ref{symmetry}, obtained in the above subsection we can replace the minors by conjugate terms among the representation of the functions $S_n(k)$. It may looks like much simple to compute the jump matrices $J_{m,n}(x,t,k)$.
\end{remark}

\subsection{The residue conditions}
Since $\mu_2$ is an entire function, it follows from (\ref{MnSnrel}) that M can only have sigularities at the points where the $S_n's$ have singularities.
We denote the possible zeros by $\{k_j\}_1^N$ and assume they satisfy the following assumption.
\begin{assumption}\label{assum}
We assume that
\begin{itemize}
\item $m_{11}(\mathcal{A})(k)$ has $n_0$ possible simple zeros in $D_1$ denoted by $\{k_j\}_1^{n_0}$;
\item $(S^Ts^A)_{11}(k)$ has $n_1-n_0$ possible simple zeros in $D_2$ denoted by $\{k_j\}_{n_0+1}^{n_1}$;
\item $(s^TS^A)_{11}(k)$ has $n_2-n_1$ possible simple zeros in $D_3$ denoted by $\{k_j\}_{n_1+1}^{n_2}$;
\item $\mathcal{A}_{11}(k)$ has $N-n_2$ possible simple zeros in $D_4$ denoted by $\{k_j\}_{n_2+1}^{N}$;
\end{itemize}
and that none of these zeros coincide. Moreover, we assume that none of these functions have zeros on the boundaries of the $D_n$'s.
\end{assumption}
We determine the residue conditions at these zeros in the following:
\begin{proposition}\label{propos}
Let $\{M_n\}_1^4$ be the eigenfunctions defined by (\ref{Mndef}) and assume that the set $\{k_j\}_1^N$ of singularities are as the above assumption. Then the following residue conditions hold:
\begin{subequations}
\be\label{M11D1res}
{Res}_{k=k_j}[M]_1=\frac{\mathcal{A}_{33}(k_j)[M(k_j)]_2-\mathcal{A}_{23}(k_j)[M(k_j)]_3}{\dot m_{11}(\mathcal{A})(k_j)m_{21}(\mathcal{A})(k_j)}e^{2\tha(k_j)},\quad 1\le j\le n_0,k_j\in D_1
\ee
\be\label{M21D2res}
\ba{r}
Res_{k=k_j}[M]_1=\frac{S_{21}(k_j)s_{33}(k_j)-S_{31}(k_j)s_{23}(k_j)}{\dot{(S^Ts^A)_{33}(k_j)}m_{11}(k_j)}e^{2\tha(k_j)}[M(k_j)]_2\\
{}+\frac{S_{31}(k_j)s_{22}(k_j)-S_{21}(k_j)s_{32}(k_j)}{\dot{(S^Ts^A)_{33}(k_j)}m_{11}(k_j)}e^{2\tha(k_j)}[M(k_j)]_3\\
\quad n_0+1\le j\le n_1,k_j\in D_2,
\ea
\ee
\be\label{M32D3res}
\ba{r}
Res_{k=k_j}[M]_2=\frac{m_{33}(s)(k_j)M_{21}(S)(k_j)-m_{23}(s)(k_j)M_{31}(S)(k_j)}{\dot{(s^TS^A)_{11}(k_j)}s_{11}(k_j)}e^{-2\tha(k_j)}[M(k_j)]_1\\
\quad n_1+1\le j\le n_2,k_j\in D_3,
\ea
\ee

\be\label{M33D3res}
\ba{r}
Res_{k=k_j}[M]_3=\frac{m_{32}(s)(k_j)M_{21}(S)(k_j)-m_{22}(s)(k_j)M_{31}(S)(k_j)}{\dot{(s^TS^A)_{11}(k_j)}s_{11}(k_j)}e^{-2\tha(k_j)}[M(k_j)]_1\\
\quad n_1+1\le j\le n_2,k_j\in D_3.
\ea
\ee
\be\label{M42D4res}
Res_{k=k_j}[M]_2=\frac{m_{33}(s)(k_j)}{\dot s_{11}(k_j) s_{21}(k_j)}e^{-2\tha(k_j)}[M(k_j)]_1,
\quad n_2+1\le j\le N,k_j\in D_4.
\ee
\be\label{M43D4res}
Res_{k=k_j}[M]_3=\frac{m_{32}(s)(k_j)}{\dot s_{11}(k_j) s_{21}(k_j)}e^{-2\tha(k_j)}[M(k_j)]_1,
\quad n_2+1\le j\le N,k_j\in D_4.
\ee
\end{subequations}
where $\dot f=\frac{df}{dk}$, and $\tha$ is defined by
\be\label{thaijdef}
\tha(x,t,k)=ikx+2ik^2t.
\ee
\end{proposition}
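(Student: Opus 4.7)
The plan is to derive each residue condition directly from the representation
\[
M_n(x,t,k) = \mu_2(x,t,k)\,e^{(ikx+2ik^2t)\hat\Lambda}\,S_n(k),\qquad k\in D_n,
\]
given in (\ref{MnSnrel}), combined with the explicit rational formulas for $S_n(k)$ from (\ref{Sn}). Since $\mu_2(x,t,k)$ is regular at the distinguished points $k_j$ (being analytic in the sectors predicted by (\ref{mujbodanydom})), any pole of $M$ inside $D_n$ must come from the unique scalar denominator appearing in $S_n$: namely $m_{11}(\mathcal A)$ in $D_1$, $(S^T s^A)_{11}$ in $D_2$, $(s^T S^A)_{11}$ in $D_3$, and $\mathcal A_{11}$ in $D_4$. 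This matches exactly the list in Assumption \ref{assum}.

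I would treat the four sectors in sequence, taking $D_1$ as the template. There $[S_1]_1 = (1/m_{11}(\mathcal A),0,0)^{T}$ is the only singular column, and the conjugation $e^{c\hat\Lambda}(\cdot) = e^{c\Lambda}(\cdot)e^{-c\Lambda}$ gives $[M_1]_1 = m_{11}(\mathcal A)^{-1}[\mu_2]_1$, so the residue at $k_j$ is simply $[\mu_2(k_j)]_1/\dot m_{11}(\mathcal A)(k_j)$. To rewrite $[\mu_2(k_j)]_1$ in terms of the regular columns $[M]_2$ and $[M]_3$, I would expand
\[
[M_1]_2 = \mathcal A_{22}[\mu_2]_2+\mathcal A_{32}[\mu_2]_3+e^{-2\theta}\mathcal A_{12}[\mu_2]_1,
\]
together with the analogous formula for $[M_1]_3$, and use the degeneracy of the $2\times 2$ block $\bigl(\mathcal A_{ij}\bigr)_{i,j=2,3}$ at $k_j$ (whose determinant is $m_{11}(\mathcal A)(k_j)=0$) together with Cramer's rule to solve for $[\mu_2(k_j)]_1$ as a specific linear combination of $[M_1(k_j)]_2$ and $[M_1(k_j)]_3$. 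The exponential $e^{2\theta(k_j)}$ and the cofactor $m_{21}(\mathcal A)$ in the denominator of (\ref{M11D1res}) then emerge automatically from this elimination.

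The $D_2$ case is nearly identical: the pole of $[M_2]_1$ comes from the scalar $(S^T s^A)_{11}$ sitting in the first column of $S_2$, and after substituting $(S^T s^A)_{11} = S_{11}m_{11}(s)-S_{21}m_{21}(s)+S_{31}m_{31}(s)$ and using the regular columns of $M_2$ (which involve only entries of $s$) one recovers (\ref{M21D2res}). The $D_3$ and $D_4$ cases are the mirror image: now the singular columns are the second and third columns of $S_n$, so $[M_n]_2$ and $[M_n]_3$ each acquire a simple pole, while $[M_n]_1$ is regular and plays the role formerly played by $[M]_2,[M]_3$. The same Cramer's-rule manoeuvre, applied to the $2\times 2$ block of $S$ or $\mathcal A$ whose determinant vanishes, yields the one-term residue formulas (\ref{M32D3res})--(\ref{M43D4res}). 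The exponential now appears as $e^{-2\theta(k_j)}$ because the singular columns correspond to $\Lambda$-eigenvalue $+1$ rather than $-1$, reversing the sign of $\Lambda_{jj}$ that arose from $e^{-c\Lambda_{jj}}$ in the computation of $[M_n]_j$.

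The main obstacle is organizational rather than conceptual: in $D_2$ and $D_3$ the denominators are bilinear in entries of two different spectral matrices, and the cofactor identities required to collapse the Cramer's-rule output into the compact minor-differences $m_{33}(s)m_{21}(S)-m_{23}(s)m_{31}(S)$ (and its cousins) are not obvious at first glance. A careful index-tracked use of the factorizations $s = S_n T_n^{-1}$, $S = S_n R_n^{-1}$, $\mathcal A = S_n Q_n^{-1}$ from (\ref{sSRnSnTn}), together with the triangular structure of $R_n$, $T_n$, $Q_n$ that follows from their support pattern on the contours $\gamma_{ij}^n$, makes these identities transparent. Once the algebra is organized in this way each of the six residue conditions follows from a direct, if lengthy, evaluation of $\lim_{k\to k_j}(k-k_j)[M_n]_j$.
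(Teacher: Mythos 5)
Your proposal is correct and follows essentially the same route as the paper: both start from the representation $M_n=\mu_2\,e^{(ikx+2ik^2t)\hat\Lambda}S_n$ together with the explicit formulas (\ref{Sn}), expand the columns, and eliminate the columns of $\mu_2$ by linear algebra so that the residue of the singular column is expressed through the regular columns of $M$. The only (cosmetic) difference is the order of operations --- the paper solves the regular-column relations for generic $k$ and then takes the residue, so the leftover $[\mu_2]$-term drops out as being regular at $k_j$, whereas you evaluate at $k_j$ first and exploit the rank-one degeneracy of the relevant $2\times 2$ block via a left null vector --- and the two eliminations yield identical formulas.
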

\begin{proof}
We will prove (\ref{M11D1res}),  (\ref{M32D3res}),  the other conditions follow by similar arguments.
Equation (\ref{MnSnrel}) implies the relation
\begin{subequations}
\be\label{M1S1}
M_1=\mu_2e^{(ikx+2ik^2t)\hat\Lam}S_1,
\ee

\be\label{M3S3}
M_3=\mu_2e^{(ikx+2ik^2t)\hat\Lam}S_3,
\ee

\end{subequations}
In view of the expressions for $S_1$ and $S_3$ given in (\ref{Sn}), the three columns of (\ref{M1S1}) read:
\begin{subequations}
\be\label{M11}
[M_1]_1=[\mu_2]_1\frac{1}{m_{11}(\mathcal{A})},
\ee
\be\label{M12}
[M_1]_2=[\mu_2]_1e^{-2\tha}\mathcal{A}_{12}+[\mu_2]_2\mathcal{A}_{22}+[\mu_2]_3\mathcal{A}_{32},
\ee
\be\label{M13}
[M_1]_3=[\mu_2]_1e^{-2\tha}\mathcal{A}_{13}+[\mu_2]_2\mathcal{A}_{23}+[\mu_2]_3\mathcal{A}_{33}.
\ee
\end{subequations}
while the three columns of (\ref{M3S3}) read:
\begin{subequations}
\be\label{M31}
[M_3]_1=[\mu_2]_1s_{11}+[\mu_2]_2s_{21}e^{2\tha}+[\mu_2]_3s_{31}e^{2\tha}
\ee
\be\label{M32}
\ba{rl}
[M_3]_2&=[\mu_2]_1\frac{m_{33}(s)m_{21}(S)-m_{23}(s)m_{31}(S)}{(s^TS^A)_{11}}e^{-2\tha}\\
&+[\mu_2]_2\frac{m_{33}(s)m_{11}(S)-m_{13}(s)m_{31}(S)}{(s^TS^A)_{11}}\\
&+[\mu_2]_3\frac{m_{23}(s)m_{11}(S)-m_{13}(s)m_{21}(S)}{(s^TS^A)_{11}}
\ea
\ee
\be\label{M33}
\ba{rl}
[M_3]_3&=[\mu_2]_1\frac{m_{32}(s)m_{21}(S)-m_{22}(s)m_{31}(S)}{(s^TS^A)_{11}}e^{-2\tha}\\
&+[\mu_2]_2\frac{m_{32}(s)m_{11}(S)-m_{12}(s)m_{31}(S)}{(s^TS^A)_{11}}\\
&+[\mu_2]_3\frac{m_{22}(s)m_{11}(S)-m_{12}(s)m_{21}(S)}{(s^TS^A)_{11}}.
\ea
\ee
\end{subequations}

We first suppose that $k_j\in D_1$ is a simple zero of $m_{11}(\mathcal{A})(k)$. Solving (\ref{M12}) and (\ref{M13}) for $[\mu_2]_1,[\mu_2]_3$ and substituting the result in to (\ref{M11}), we find
\[
[M_1]_1=\frac{\mathcal{A}_{33}[M_1]_2-\mathcal{A}_{32}[M_1]_3}{m_{11}(\mathcal{A})m_{21}(\mathcal{A})}e^{2\tha}-\frac{[\mu_2]_2}{m_{21}(\mathcal{A})}e^{2\tha}.
\]
Taking the residue of this equation at $k_j$, we find the condition (\ref{M11D1res}) in the case when $k_j\in D_1$.
\par
In order to prove (\ref{M32D3res}), we solve (\ref{M31}) for $[\mu_2]_1$, then substituting the result into (\ref{M32}) and (\ref{M33}), we find
\begin{subequations}
\be
[M_3]_2=\frac{m_{33}(s)}{s_{11}}[\mu_2]_2+\frac{m_{23}(s)}{s_{11}}[\mu_2]_3+\frac{m_{33}(s)m_{21}(S)-m_{23}(s)m_{31}(S)}{\dot {(s^TS^A)_{11}}s_{11}}e^{-2\tha}[M_3]_1,
\ee
\be
[M_3]_3=\frac{m_{32}(s)}{s_{11}}[\mu_2]_2+\frac{m_{22}(s)}{s_{11}}[\mu_2]_3+\frac{m_{32}(s)m_{21}(S)-m_{22}(s)m_{31}(S)}{\dot {(s^TS^A)_{11}}s_{11}}e^{-2\tha}[M_3]_1.
\ee
\end{subequations}
Taking the residue of this equation at $k_j$, we find the condition (\ref{M32D3res}) in the case when $k_j\in D_3$. 
\end{proof}

\subsection{The global relation}
The spectral functions $S(k),S_L(k)$ and $s(k)$ are not independent but satisfy an important relation. Indeed, it follows from (\ref{sSdef}) that
\be
\mu_1(x,t,k)e^{(ikx+2ik^2t)\hat \Lam}\{S^{-1}(k)s(k)e^{-ikL\hat\Lam}S_L(k)\}=\mu_4(x,t,k).
\ee
Since $\mu_1(0,T,k)=\id$, evaluation at $(0,T)$ yields the following global relation:
\be\label{globalrel}
S^{-1}(k)s(k)e^{-ikL\hat\Lam}S_L(k)=e^{-2ik^2T\hat \Lam}c(T,k),
\ee
where $c(T,k)=\mu_4(0,T,k)$.

\section{The Riemann-Hilbert problem}

The sectionally analytic function $M(x,t,k)$ defined in section 2 satisfies a Riemann-Hilbert problem which can be formulated in terms of the initial and boundary values of $q_1(x,t)$ and $q_2(x,t)$. By solving this Riemann-Hilbert problem, the solution of (\ref{2-NLSe}) can be recovered for all values of $x,t$.
\begin{theorem}
Suppose that $q_1(x,t)$ and $q_2(x,t)$ are a pair of solutions of (\ref{2-NLSe}) in the interval domain $\Om$. Then $q_1(x,t)$ and $q_2(x,t)$ can be reconstructed from the initial value $\{q_{10}(x),q_{20}(x)\}$ and boundary values $\{g_{01}(t),g_{02}(t),g_{11}(t),g_{12}(t)\}$, $\{f_{01}(t),f_{02}(t),f_{11}(t),f_{12}(t)\}$ defined as follows,
\be\label{inibouvalu}
\ba{ll}
q_{10}(x)=q_1(x,t=0),& q_{20}(x)=q_2(x,t=0),\\
g_{01}(t)=q_1(x=0,t),& g_{02}(t)=q_2(x=0,t),\\
f_{01}(t)=q_1(x=L,t),& f_{02}(t)=q_2(x=L,t),\\
g_{11}(t)=q_{1x}(x=0,t),& g_{12}(t)=q_{2x}(x=0,t),\\
f_{11}(t)=q_{1x}(x=L,t),& f_{12}(t)=q_{2x}(x=L,t).
\ea
\ee
\par
Use the initial and boundary data to define the jump matrices $J_{m,n}(x,t,k)$ in terms of the spectral functions $s(k)$ and $S(k),S_L(k)$ by equation (\ref{sSdef}).
\par
Assume that the possible zeros $\{k_j\}_1^N$ of the functions $m_{11}(\mathcal{A})(k)$, $(S^Ts^A)_{11}(k)$, $(s^TS^A)_{11}(k)$ and $\mathcal{A}_{11}(k)$ are as in assumption \ref{assum}.
\par
Then the solution $\{q_1(x,t),q_2(x,t)\}$ is given by
\be\label{usolRHP}
q_1(x,t)=2i\lim_{k\rightarrow \infty}(kM(x,t,k))_{12},\quad q_2(x,t)=2i\lim_{k\rightarrow \infty}(kM(x,t,k))_{13} .
\ee
where $M(x,t,k)$ satisfies the following $3\times 3$ matrix Riemann-Hilbert problem:
\begin{itemize}
\item $M$ is sectionally meromorphic on the Riemann $k-$sphere with jumps across the contours $\bar D_n\cap \bar D_m,n,m=1,\cdots, 4$, see Figure \ref{fig2}.
\item Across the contours $\bar D_n\cap \bar D_m$, $M$ satisfies the jump condition
      \be\label{MRHP}
      M_n(x,t,k)=M_m(x,t,k)J_{m,n}(x,t,k),\quad k\in \bar D_n\cap \bar D_m,n,m=1,2,3,4.
      \ee
\item $M(x,t,k)=\id+O(\frac{1}{k}),\qquad k\rightarrow \infty$.
\item The residue condition of $M$ is showed in Proposition \ref{propos}.
\end{itemize}
\end{theorem}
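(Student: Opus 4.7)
The plan is to assemble the RH problem from results already established in Section 2, and then derive the reconstruction formula (\ref{usolRHP}) by matching the large-$k$ asymptotics of $M$ against the $x$-part of the Lax pair.

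First, I would verify that the four bulleted items in the theorem are already in hand. Sectional meromorphy of $M$ and the jump relation (\ref{MRHP}) are precisely (\ref{Mjump}) together with the definition (\ref{Jmndef}) of $J_{m,n}$. The normalization $M(x,t,k)=\id+O(1/k)$ as $k\to\infty$ inside each $D_n$ is the content of Proposition in Section 2.3 (equation (\ref{Mnasy})), and the residue conditions are exactly what Proposition \ref{propos} delivered. Thus the genuinely new step is the explicit extraction of $q_1,q_2$ from $M$.

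To derive (\ref{usolRHP}) I would expand
\[
M(x,t,k) = \id + \frac{M^{(1)}(x,t)}{k} + \frac{M^{(2)}(x,t)}{k^2} + \cdots, \qquad k\to \infty,
\]
insert this into the $x$-equation of (\ref{muLaxe}), $\mu_x-[ik\Lam,\mu]=V_1\mu$, and collect like powers of $k$. The $O(k)$ balance forces $[\Lam,M^{(1)}]$ to reproduce the off-diagonal part of $V_1$. Using $\Lam=\mathrm{diag}(-1,1,1)$ and $[\Lam,M^{(1)}]_{ij}=(\Lam_{ii}-\Lam_{jj})M^{(1)}_{ij}$, the $(1,2)$ and $(1,3)$ entries of $i[\Lam,M^{(1)}]=V_1$ give $q_1 = 2i\,M^{(1)}_{12}(x,t)$ and $q_2 = 2i\,M^{(1)}_{13}(x,t)$, i.e.\ exactly (\ref{usolRHP}). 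One can check the $(2,1)$ and $(3,1)$ entries to confirm consistency with the symmetry $\sig\bar q_1=2i\,M^{(1)}_{21}$, $\sig\bar q_2=2i\,M^{(1)}_{31}$, which is automatic from Lemma \ref{symmetry}.

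The principal obstacle is not the algebraic extraction but justifying that the formal expansion is a genuine asymptotic expansion uniform in $(x,t)\in\Om$, so that $\lim_{k\to\infty}k(M-\id)$ exists and coincides with $M^{(1)}$. This is handled by iterating the Volterra integral equations (\ref{Mndef}) in powers of $1/k$ exactly as in Appendix B of \cite{l1}: each iterate picks up a factor $1/k$ from integration against $e^{(ikx+2ik^2 t)\hat \Lam}$ along the appropriate contour in $\gam^n$, and the uniform bounds from the boundedness assertions in Section 2.3 give term-by-term control. Once this is in place, the recovery formula (\ref{usolRHP}) holds for every $(x,t)\in\Om$, and combining it with the already-proved jump, asymptotic and residue conditions completes the theorem.
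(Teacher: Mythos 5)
Your proposal is correct and follows essentially the same route as the paper: the jump, normalization and residue conditions are simply collected from Section 2, and the reconstruction formula (\ref{usolRHP}) is obtained from the large-$k$ expansion of the eigenfunctions substituted into the $x$-part of the Lax pair (the paper's one-line proof invokes exactly this, with the expansion written out in (\ref{mujasykinf})). Note only a harmless sign slip in your intermediate identity: with $\Lam=\mathrm{diag}(-1,1,1)$ the $O(1)$ balance reads $-i[\Lam,M^{(1)}]=V_1$ (equivalently $i[M^{(1)},\Lam]=V_1$), which then gives $q_1=2i\,M^{(1)}_{12}$ as you state.
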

\begin{proof}
It only remains to prove (\ref{usolRHP}) and this equation follows from the large $k$ asymptotics of the eigenfunctions.
\end{proof}

\section{Non-linearizable Boundary Conditions}
A major difficulty of initial-boundary value problems is that some of the boundary values are unkown for a well-posed problem. All boundary values are needed for the definition of $S(k), S_L(k)$, and hence for the formulation of the Riemann-Hilbert problem. Our main result expresses the unknown boundary data in terms of the prescribed boundary data and the initial data via the solution of a system of nonlinear integral equations.

\subsection{Asymptotics}

An analysis of (\ref{muLaxe}) shows that the eigenfunctions $\{\mu_j\}_1^4$ have the following asymptotics as $k\rightarrow\infty$:

\be\label{mujasykinf}
\ba{l}
\mu_j(x,t,k)=\id+
\frac{1}{k}\left(\ba{lll}\mu^{(1)}_{11}&\mu^{(1)}_{12}&\mu^{(1)}_{13}\\\mu^{(1)}_{21}&\mu^{(1)}_{22}&\mu^{(1)}_{23}\\\mu^{(1)}_{31}&\mu^{(1)}_{32}&\mu^{(1)}_{33}\ea\right)
+\frac{1}{k^2}\left(\ba{lll}\mu^{(2)}_{11}&\mu^{(2)}_{12}&\mu^{(2)}_{13}\\\mu^{(2)}_{21}&\mu^{(2)}_{22}&\mu^{(2)}_{23}\\\mu^{(2)}_{31}&\mu^{(2)}_{32}&\mu^{(2)}_{33}\ea\right)
+O(\frac{1}{k^3})\\
=\id+\frac{1}{k}\left(\ba{ccc}\int_{(x_j,t_j)}^{(x,t)} \Dta_{11}&\frac{q_1}{2i}&\frac{q_2}{2i}\\
-\frac{\sig\bar q_1}{2i}&\int_{(x_j,t_j)}^{(x,t)}\Dta^{(1)}_{22}&\int_{(x_j,t_j)}^{(x,t)}\Dta^{(1)}_{23}\\
-\frac{\sig\bar q_2}{2i}&\int_{(x_j,t_j)}^{(x,t)}\Dta^{(1)}_{32}&\int_{(x_j,t_j)}^{(x,t)}\Dta^{(1)}_{33} \ea\right)\\
+\frac{1}{k^2}\left(\ba{ccc}
\mu^{(2)}_{11}&\frac{q_1\mu^{(1)}_{22}+q_2\mu^{(1)}_{32}}{2i}+\frac{1}{4}q_{1x}&\frac{q_1\mu^{(1)}_{23}+q_2\mu^{(1)}_{33}}{2i}+\frac{1}{4}q_{2x}\\
\frac{1}{4}\sig \bar q_{1x}-\frac{\sig}{2i}\bar q_{1}\mu^{(1)}_{11}&\mu^{(2)}_{22}&\mu^{(2)}_{23}\\
\frac{1}{4}\sig \bar q_{2x}-\frac{\sig}{2i}\bar q_{2}\mu^{(1)}_{11}&\mu^{(2)}_{32}&\mu^{(2)}_{33}
\ea\right)+O(\frac{1}{k^3}).
\ea
\ee

where

\be\label{Dtadef}
\ba{l}
\Dta_{11}=\sig[\frac{i}{2}(|q_1|^2+|q_2|^2)dx+\frac{1}{2}(q_1\bar q_{1x}-\bar q_1 q_{1x}+q_2 \bar q_{2x}-\bar q_2 q_{2x})dt]\\
\Dta^{(1)}_{22}=\sig[-\frac{i}{2}|q_1|^2dx-\frac{1}{2}(\bar q_{1x}q_1-\bar q_1 q_{1x})dt]\\
\Dta^{(1)}_{23}=\sig[-\frac{i}{2}\bar q_1 q_2dx-\frac{1}{2}(\bar q_{1x}q_2-\bar q_1 q_{2x})dt]\\
\Dta^{(1)}_{32}=\sig[-\frac{i}{2}\bar q_2 q_1dx-\frac{1}{2}(\bar q_{2x}q_1-\bar q_2 q_{1x})dt]\\
\Dta^{(1)}_{33}=\sig[-\frac{i}{2}|q_2|^2dx-\frac{1}{2}(\bar q_{2x}q_2-\bar q_2 q_{2x})dt].
\ea
\ee

The functions $\{\mu^{(i)}_{jl}=\mu^{(i)}_{jl}(x,t)\}_{j,l=1}^{3},i=1,2$ are independent of $k$.
\begin{remark}
Because we do not need the asymptotic expressions of the $\mu^{(2)}_{11}$ and $\{\mu^{(2)}_{ij}\}_{i,j=2}^{3}$ in the following analysis, we do not write down the explicit formulas for these functions.
\end{remark}
We define functions $\{\Phi_{ij}(t,k)\}_{i,j=1}^{3}$ and $\{\phi_{ij}(t,k)\}_{i,j=1}^{3}$ by:
\be
\mu_2(0,t,k)=\left(\ba{lll}\Phi_{11}(t,k)&\Phi_{12}(t,k)&\Phi_{13}(t,k)\\
\Phi_{21}(t,k)&\Phi_{22}(t,k)&\Phi_{23}(t,k)\\\Phi_{31}(t,k)&\Phi_{32}(t,k)&\Phi_{33}(t,k)\ea\right),
\ee
\be
\mu_3(L,t,k)=\left(\ba{lll}\phi_{11}(t,k)&\phi_{12}(t,k)&\phi_{13}(t,k)\\
\phi_{21}(t,k)&\phi_{22}(t,k)&\phi_{23}(t,k)\\\phi_{31}(t,k)&\phi_{32}(t,k)&\phi_{33}(t,k)\ea\right).
\ee

\par
From the asymptotic of $\mu_j(x,t,k)$ in (\ref{mujasykinf}) we have
\be\label{mu2x0tk}
\ba{rcl}
\mu_{2}(0,t,k)&=&\id+
\frac{1}{k}\left(\ba{ccc}\Phi^{(1)}_{11}(t)&\Phi^{(1)}_{12}(t)&\Phi^{(1)}_{13}(t)\\
\Phi^{(1)}_{21}(t)&\Phi^{(1)}_{22}(t)&\Phi^{(1)}_{23}(t)\\
\Phi^{(1)}_{31}(t)&\Phi^{(1)}_{32}(t)&\Phi^{(1)}_{33}(t)
\ea\right)\\
&&{}+\frac{1}{k^2}\left(\ba{ccc}\Phi^{(2)}_{11}(t)&\Phi^{(2)}_{12}(t)&\Phi^{(2)}_{13}(t)\\
\Phi^{(2)}_{21}(t)&\Phi^{(2)}_{22}(t)&\Phi^{(2)}_{23}(t)\\
\Phi^{(2)}_{31}(t)&\Phi^{(2)}_{32}(t)&\Phi^{(2)}_{33}(t)
\ea
\right)+O(\frac{1}{k^3}).
\ea
\ee
Recalling that the definition of the boundary data at $x=0$, we have
\be
\ba{ll}
\Phi_{12}^{(1)}(t)=\frac{1}{2i}g_{01}(t),&\Phi_{12}^{(2)}(t)=\frac{1}{4}g_{11}(t)+\frac{(g_{01}(t)\Phi^{(1)}_{22}(t)+g_{02}(t)\Phi^{(1)}_{32}(t))}{2i},\\
\Phi_{13}^{(1)}(t)=\frac{1}{2i}g_{02}(t),&\Phi_{13}^{(2)}(t)=\frac{1}{4}g_{12}(t)+\frac{(g_{01}(t)\Phi^{(1)}_{23}(t)+g_{02}(t)\Phi^{(1)}_{33}(t))}{2i},\\
\Phi_{22}^{(1)}(t)=-\frac{\sig}{2}\int_{0}^{t}(\bar g_{11}(t)g_{01}(t)-\bar g_{01}(t)g_{11}(t))dt,&\Phi_{23}^{(1)}(t)=-\frac{\sig}{2}\int_{0}^{t}(\bar g_{11}(t)g_{02}(t)-\bar g_{01}(t)g_{12}(t))dt,\\
\Phi_{32}^{(1)}(t)=-\frac{\sig}{2}\int_{0}^{t}(\bar g_{12}(t)g_{01}(t)-\bar g_{02}(t)g_{11}(t))dt,&\Phi_{33}^{(1)}(t)=-\frac{\sig}{2}\int_{0}^{t}(\bar g_{12}(t)g_{02}(t)-\bar g_{02}(t)g_{12}(t))dt.
\ea
\ee

\par
In particular, we find the following expressions for the boudary values at $x=0$:
\begin{subequations}\label{g01}
\be\label{g0}
g_{01}(t)=2i\Phi_{12}^{(1)}(t),\quad
g_{02}(t)=2i\Phi_{13}^{(1)}(t),
\ee
\be\label{g1}
\ba{l}
g_{11}(t)=4\Phi_{12}^{(2)}(t)+2i(g_{01}(t)\Phi_{22}^{(1)}(t)+g_{02}(t)\Phi_{32}^{(1)}(t)),\\
g_{12}(t)=4\Phi_{13}^{(2)}(t)+2i(g_{01}(t)\Phi_{23}^{(1)}(t)+g_{02}(t)\Phi_{33}^{(1)}(t))
\ea
\ee
\end{subequations}

Similarly, we have the asymptotic formulas for $\mu_3(L,t,k)=\{\phi_{ij}(t,k)\}_{i,j=1}^{3}$,
\be\label{mu3xLtk}
\ba{rcl}
\mu_{3}(L,t,k)&=&\id+
\frac{1}{k}\left(\ba{ccc}\phi^{(1)}_{11}(t)&\phi^{(1)}_{12}(t)&\phi^{(1)}_{13}(t)\\
\phi^{(1)}_{21}(t)&\phi^{(1)}_{22}(t)&\phi^{(1)}_{23}(t)\\
\phi^{(1)}_{31}(t)&\phi^{(1)}_{32}(t)&\phi^{(1)}_{33}(t)
\ea\right)\\
&&{}+\frac{1}{k^2}\left(\ba{ccc}\phi^{(2)}_{11}(t)&\phi^{(2)}_{12}(t)&\phi^{(2)}_{13}(t)\\
\phi^{(2)}_{21}(t)&\phi^{(2)}_{22}(t)&\phi^{(2)}_{23}(t)\\
\phi^{(2)}_{31}(t)&\phi^{(2)}_{32}(t)&\phi^{(2)}_{33}(t)
\ea
\right)+O(\frac{1}{k^3}).
\ea
\ee
Recalling that the definition of the boundary data at $x=L$, we have
\be
\ba{ll}
\phi_{12}^{(1)}(t)=\frac{1}{2i}f_{01}(t),&\phi_{12}^{(2)}(t)=\frac{1}{4}f_{11}(t)+\frac{(f_{01}(t)\phi^{(1)}_{22}(t)+f_{02}(t)\phi^{(1)}_{32}(t))}{2i},\\
\phi_{13}^{(1)}(t)=\frac{1}{2i}f_{02}(t),&\phi_{13}^{(2)}(t)=\frac{1}{4}f_{12}(t)+\frac{(f_{01}(t)\phi^{(1)}_{23}(t)+f_{02}(t)\phi^{(1)}_{33}(t))}{2i},\\
\phi_{22}^{(1)}(t)=-\frac{\sig}{2}\int_{0}^{t}(\bar f_{11}(t)f_{01}(t)-\bar f_{01}(t)f_{11}(t))dt,&\phi_{23}^{(1)}(t)=-\frac{\sig}{2}\int_{0}^{t}(\bar f_{11}(t)f_{02}(t)-\bar f_{01}(t)f_{12}(t))dt,\\
\phi_{32}^{(1)}(t)=-\frac{\sig}{2}\int_{0}^{t}(\bar f_{12}(t)f_{01}(t)-\bar f_{02}(t)f_{11}(t))dt,&\phi_{33}^{(1)}(t)=-\frac{\sig}{2}\int_{0}^{t}(\bar f_{12}(t)f_{02}(t)-\bar f_{02}(t)f_{12}(t))dt.
\ea
\ee
In particular, we find the following expressions for the boudary values at $x=L$:
\begin{subequations}\label{f01}
\be\label{f0}
f_{01}(t)=2i\phi_{12}^{(1)}(t),\quad
f_{02}(t)=2i\phi_{13}^{(1)}(t),
\ee
\be\label{f1}
\ba{l}
f_{11}(t)=4\phi_{12}^{(2)}(t)+2i(f_{01}(t)\phi_{22}^{(1)}(t)+f_{02}(t)\phi_{32}^{(1)}(t)),\\
f_{12}(t)=4\phi_{13}^{(2)}(t)+2i(f_{01}(t)\phi_{23}^{(1)}(t)+f_{02}(t)\phi_{33}^{(1)}(t))
\ea
\ee
\end{subequations}

\par

From the global relation (\ref{globalrel})and replacing $T$ by $t$, we find
\be\label{useglo}
\mu_2(0,t,k)e^{2ik^2t\hat \Lam}\{s(k)e^{-ikL\hat\Lam}S_L(k)\}=c(t,k).
\ee
\par
From the relation (\ref{SLmu4}) and the symmetry (\ref{symm}), we know that the spectral function $S_L(k)$ can be expressed by $\{\phi_{ij}(t,k)\}_{i,j=1}^{3}$. So if we denote the matrix-value function $c(t,k)$ as $c(t,k)=\left(c_{ij}(t,k)\right)_{i,j=1}^{3}$. 
The functions $\{c_{ij}(t,k)\}_{i=1,j=2}^{3}$ are analytic and bounded in $D_1$ away from the possible zeros of $m_{11}(\mathcal{A})(k)$ and of order $O(\frac{1+e^{2ikL}}{k})$ as $k\rightarrow \infty$.

\par
In the vanishing initial value case, the asymptotic of $c_{1j}(t,k),j=2,3$ becomes much more simple.

\begin{lemma}
We assuming that the initial value and boundary value are compatible at $x=0$ and $x=L$ (i.e. at $x=0$, $q_{10}(0)=g_{01}(0),q_{20}(0)=g_{02}(0)$; at $x=L$, $q_{10}(L)=f_{01}(0),q_{20}(L)=f_{02}(0)$). Then, in the vanishing initial value case, the global relation (\ref{useglo}) implies that the large $k$ behavior of $c_{1j}(t,k),j=2,3$ satisfies
\begin{subequations}\label{c1jlargek}
\be\label{c12largek}
\ba{rl}
c_{12}(t,k)&=\frac{\Phi^{(1)}_{12}(t)}{k}+\frac{\Phi_{12}^{(2)}(t)+\Phi_{12}^{(1)}(t)\bar \phi_{22}^{(1)}(t)+\Phi_{13}^{(1)}(t)\bar \Phi_{23}^{(1)}(t)}{k^2}+O(\frac{1}{k^3})\\
{}&-\sig\left[\frac{\bar \phi_{21}^{(1)}(t)}{k}+\frac{\bar \phi_{21}^{(2)}(t)+\Phi_{11}^{(1)}(t)\bar \phi_{21}^{(1)}(t)}{k^2}+O(\frac{1}{k^3})\right]e^{2ikL}\quad k\rightarrow \infty,
\ea
\ee
\be\label{c13largek}
\ba{rl}
c_{13}(t,k)&=\frac{\Phi^{(1)}_{13}(t)}{k}+\frac{\Phi_{13}^{(2)}(t)+\Phi_{12}^{(1)}(t)\bar \phi_{32}^{(1)}(t)+\Phi_{13}^{(1)}(t)\bar \Phi_{33}^{(1)}(t)}{k^2}+O(\frac{1}{k^3})\\
{}&-\sig\left[\frac{\bar \phi_{31}^{(1)}(t)}{k}+\frac{\bar \phi_{31}^{(2)}(t)+\Phi_{11}^{(1)}(t)\bar \phi_{31}^{(1)}(t)}{k^2}+O(\frac{1}{k^3})\right]e^{2ikL} \quad k\rightarrow \infty.
\ea
\ee
\end{subequations}
\end{lemma}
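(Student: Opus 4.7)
The plan is to exploit the vanishing of the initial data to reduce $s(k)$ to the identity, collapse the global relation (\ref{useglo}) into an explicit product separating the dependence on $\mu_2$ at $x=0$ from that on $\mu_3$ at $x=L$, and then read off the expansion from (\ref{mu2x0tk})--(\ref{mu3xLtk}) using Lemma \ref{symmetry}.

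First I would show $s(k)\equiv\id$. Since $s(k)=\mu_3(0,0,k)$ and the contour $\gam_3$, when evaluated at $(0,0)$, can be routed along the horizontal segment $\{(x',0):0\le x'\le L\}$, the integrand in (\ref{mujdef}) involves only $V_1(x',0)$, which is built from $q_{10},q_{20}$ and therefore vanishes identically. Substituting $s=\id$ into (\ref{useglo}) together with the natural $t$-analog $S_L(t,k)=e^{-2ik^2t\hat\Lam}\mu_3^{-1}(L,t,k)$ of (\ref{SLmu4}) and using the identity $e^{\alpha\hat\Lam}(XY)=(e^{\alpha\hat\Lam}X)(e^{\alpha\hat\Lam}Y)$, the time-dependent factors $e^{\pm 2ik^2 t\Lam}$ telescope out and produce
\be\label{ccollapse}
c(t,k)=\mu_2(0,t,k)\,e^{-ikL\hat\Lam}\mu_3^{-1}(L,t,k).
\ee
Because $\Lam=\mathrm{diag}(-1,1,1)$, the exponential $e^{-ikL(\Lam_m-\Lam_j)}$ arising in the $(1,j)$ entry is $e^{2ikL}$ for $m=1$ and $1$ for $m=2,3$, which already accounts for the single $e^{2ikL}$ factor in the stated asymptotics.

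Next I would invoke Lemma \ref{symmetry} in the form $\mu_3^{-1}(L,t,k)=A\,\overline{\mu_3(L,t,\bar k)}^T\,A$ with $A=\mathrm{diag}(1,-\sig,-\sig)$, which gives $[\mu_3^{-1}(L,t,k)]_{mj}=-\sig\,\overline{\phi_{jm}(t,\bar k)}$ when exactly one of $\{m,j\}$ equals $1$, and $\overline{\phi_{jm}(t,\bar k)}$ otherwise. Expanding (\ref{ccollapse}) for $j=2,3$ yields
\be
c_{1j}(t,k)=-\sig\,\Phi_{11}(t,k)\,\overline{\phi_{j1}(t,\bar k)}\,e^{2ikL}+\Phi_{12}(t,k)\overline{\phi_{j2}(t,\bar k)}+\Phi_{13}(t,k)\overline{\phi_{j3}(t,\bar k)}.
\ee
Inserting (\ref{mu2x0tk}) and the conjugated form of (\ref{mu3xLtk}) into this identity and collecting by powers of $1/k$ and by the presence of $e^{2ikL}$ reproduces the expansions (\ref{c12largek})--(\ref{c13largek}) with remainder $O(1/k^3)$. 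The principal subtlety is the telescoping of $e^{\pm 2ik^2 t\Lam}$ in the collapse (\ref{ccollapse}); were it to fail, the asymptotics would carry rapidly oscillating and, in $D_1$, even exponentially growing contributions, and the clean $e^{2ikL}$-separation asserted by the lemma would be lost. Everything else is routine bookkeeping of two power-series products.
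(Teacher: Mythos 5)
Your first half coincides with the paper's own route: with vanishing initial data $s(k)=\id$ (the contour for $\mu_3(0,0,k)$ runs along $t'=0$ and sees only $V_1$ built from $q_{10},q_{20}$), the exponentials $e^{\pm 2ik^2t\hat\Lam}$ do telescope, and the resulting product formula $c_{1j}=\Phi_{12}\overline{\phi_{j2}}+\Phi_{13}\overline{\phi_{j3}}-\sig\,\Phi_{11}\overline{\phi_{j1}}e^{2ikL}$ is exactly the paper's (\ref{c12def})--(\ref{c13def}). The telescoping you single out as the ``principal subtlety'' is in fact unproblematic.

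The genuine gap is in your final step, where you ``insert (\ref{mu2x0tk}) and the conjugated form of (\ref{mu3xLtk}) and collect powers of $1/k$.'' The displays (\ref{mu2x0tk}) and (\ref{mu3xLtk}) record only the power-series part of the large-$k$ behaviour; the actual solutions of the $t$-part of the Lax pair at $x=0$ and $x=L$ carry additional oscillatory contributions proportional to $e^{\pm 4ik^2t}$. The paper computes these explicitly via the ans\"atze (\ref{Phi1albt})--(\ref{Phi3albtrsu}): for instance the first column of $\mu_2(0,t,k)$ contains the term $\frac{1}{k}\bigl(0,-\Phi^{(1)}_{21}(0),-\Phi^{(1)}_{31}(0)\bigr)^{T}e^{4ik^2t}$ and the second column contains $\frac{1}{k}\bigl(-\Phi^{(1)}_{12}(0),0,0\bigr)^{T}e^{-4ik^2t}$ together with an explicit $1/k^2$ correction. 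On the contours where the lemma is subsequently used (where $k^2$ is real, e.g.\ on $\pt D_3^0$) these terms neither grow nor decay, so they would enter at the same orders $1/k$ and $1/k^2$ as the terms you retain. It is precisely the compatibility hypothesis of the lemma (in the vanishing-initial-value case, $g_{01}(0)=q_{10}(0)=0$, $f_{01}(0)=q_{10}(L)=0$, etc.) that forces the leading oscillatory coefficients such as $\Phi^{(1)}_{12}(0)=\frac{1}{2i}g_{01}(0)$ and $\phi^{(1)}_{21}(0)$ to vanish through the orders displayed in (\ref{c12largek})--(\ref{c13largek}). Your proposal never invokes that hypothesis, which is the telltale sign that this step is missing; without it the expansion acquires extra $e^{\pm4ik^2t}$ terms and the asserted asymptotics fails. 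To close the gap you must reproduce (or cite) the WKB-type computation of the oscillatory part and verify that its coefficients vanish to the required order under the stated compatibility conditions.
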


\begin{proof}

%
The global relation shows that under the assumption of vanishing initial value
\begin{subequations}
\be\label{c12def}
c_{12}(t,k)=\Phi_{12}(t,k)\bar \phi_{22}(t,\bar k)+\Phi_{13}(t,k) \bar \phi_{23}(t,\bar k)-\sig \Phi_{11}\bar \phi_{21}(t,\bar k)e^{2ikL},
\ee
\be\label{c13def}
c_{13}(t,k)=\Phi_{12}(t,k)\bar \phi_{32}(t,\bar k)+\Phi_{13}(t,k) \bar \phi_{33}(t,\bar k)-\sig \Phi_{11}\bar \phi_{31}(t,\bar k)e^{2ikL}
\ee
\end{subequations}
Recalling the equation
\be\label{Phit}
\mu_t-2ik^2[\Lam,\mu]=V_2\mu.
\ee
\begin{subequations}\label{Phiteqn}
From the first column of the equation (\ref{Phit}) we get
\be\label{Phi1t}
\left\{
\ba{l}
\Phi_{11t}=2k(g_{01}\Phi_{21}+g_{02}\Phi_{31})-i\sig (|g_{01}|^2+|g_{02}|^2)\Phi_{11}+i(g_{11}\Phi_{21}+g_{12}\Phi_{31}),\\
\Phi_{21t}-4ik^2\Phi_{21}=2\sig k \bar g_{01}\Phi_{11}+i\sig (|g_{01}|^2\Phi_{21}+\bar g_{01}g_{02}\Phi_{31})-i\sig \bar g_{11}\Phi_{11},\\
\Phi_{31t}-4ik^2\Phi_{31}=2\sig k \bar g_{02}\Phi_{11}+i\sig (\bar g_{02}g_{01}\Phi_{21}+|g_{02}|^2\Phi_{31})-i\sig \bar g_{12}\Phi_{11},
\ea
\right.
\ee

From the second column of the equation (\ref{Phit}) we get
\be\label{Phi2t}
\left\{
\ba{l}
\Phi_{12t}+4ik^2\Phi_{12}=2k(g_{01}\Phi_{22}+g_{02}\Phi_{32})-i\sig (|g_{01}|^2+|g_{02}|^2)\Phi_{12}+i(g_{11}\Phi_{22}+g_{12}\Phi_{32}),\\
\Phi_{22t}=2\sig k \bar g_{01}\Phi_{12}+i\sig (|g_{01}|^2\Phi_{22}+\bar g_{01}g_{02}\Phi_{32})-i\sig \bar g_{11}\Phi_{12},\\
\Phi_{32t}=2\sig k \bar g_{02}\Phi_{12}+i\sig (\bar g_{02}g_{01}\Phi_{22}+|g_{02}|^2\Phi_{32})-i\sig \bar g_{12}\Phi_{12},
\ea
\right.
\ee
From the third column of the equation (\ref{Phit}) we get
\be\label{Phi3t}
\left\{
\ba{l}
\Phi_{13t}+4ik^2\Phi_{13}=2k(g_{01}\Phi_{23}+g_{02}\Phi_{33})-i\sig (|g_{01}|^2+|g_{02}|^2)\Phi_{13}+i(g_{11}\Phi_{23}+g_{12}\Phi_{33}),\\
\Phi_{23t}=2\sig k \bar g_{01}\Phi_{13}+i\sig (|g_{01}|^2\Phi_{23}+\bar g_{01}g_{02}\Phi_{33})-i\sig \bar g_{11}\Phi_{13},\\
\Phi_{33t}=2\sig k \bar g_{02}\Phi_{13}+i\sig (\bar g_{02}g_{01}\Phi_{23}+|g_{02}|^2\Phi_{33})-i\sig \bar g_{12}\Phi_{13},
\ea
\right.
\ee
\end{subequations}
\par
Suppose
\be\label{Phi1albt}
\left(\ba{l}\Phi_{11}\\\Phi_{21}\\\Phi_{31}\ea\right)=\left(\alpha_0(t)+\frac{\alpha_1(t)}{k}+\frac{\alpha_2(t)}{k^2}+\cdots\right)
+\left(\beta_0(t)+\frac{\beta_1(t)}{k}+\frac{\beta_2(t)}{k^2}+\cdots\right)e^{4ik^2t},
\ee
where the coefficients $\alpha_j(t)$ and $\beta_{j}(t)$, $j=0,1,2,\cdots$, are independent of $k$ and are $3\times 1$ matrix functions.
\par
To determine these coefficients,we substitute the above equation into equation (\ref{Phi1t}) and use the initial conditions
\[
\alpha_0(0)+\beta_0(0)=(\ba{ccc}1&0&0\ea)^T,\quad \alpha_1(0)+\beta_1(0)=(\ba{ccc}0&0&0\ea)^T.
\]
Then we get
\be\label{Phi2albtrsu}
\ba{l}
\left(\ba{l}\Phi_{11}\\\Phi_{21}\\\Phi_{31}\ea\right)=
\left(\ba{l}1\\0\\0\ea\right)+\frac{1}{k}\left(\ba{l}\Phi_{11}^{(1)}\\\Phi_{21}^{(1)}\\\Phi_{31}^{(1)}\ea\right)+\frac{1}{k^2}\left(\ba{l}\Phi_{11}^{(2)}\\\Phi_{21}^{(2)}\\\Phi_{31}^{(2)}\ea\right)+O(\frac{1}{k^3})\\
{}+\left[\frac{1}{k}\left(\ba{c}0\\-\Phi^{(1)}_{21}(0)\\-\Phi^{(1)}_{31}(0)\ea\right)+O(\frac{1}{k^2})\right]e^{4ik^2t}
\ea
\ee
\par
Similarly, suppose
\be\label{Phi2albt}
\left(\ba{l}\Phi_{12}\\\Phi_{22}\\\Phi_{32}\ea\right)=\left(\alpha_0(t)+\frac{\alpha_1(t)}{k}+\frac{\alpha_2(t)}{k^2}+\cdots\right)
+\left(\beta_0(t)+\frac{\beta_1(t)}{k}+\frac{\beta_2(t)}{k^2}+\cdots\right)e^{-4ik^2t},
\ee
where the coefficients $\alpha_j(t)$ and $\beta_{j}(t)$, $j=0,1,2,\cdots$, are independent of $k$ and are $3\times 1$ matrix functions.
\par
To determine these coefficients,we substitute the above equation into equation (\ref{Phi2t}) and use the initial conditions
\[
\alpha_0(0)+\beta_0(0)=(\ba{ccc}0&1&0\ea)^T,\quad \alpha_1(0)+\beta_1(0)=(\ba{ccc}0&0&0\ea)^T.
\]
Then we get
\be\label{Phi2albtrsu}
\ba{l}
\left(\ba{l}\Phi_{12}\\\Phi_{22}\\\Phi_{32}\ea\right)=
\left(\ba{l}0\\1\\0\ea\right)+\frac{1}{k}\left(\ba{l}\Phi_{12}^{(1)}\\\Phi_{22}^{(1)}\\\Phi_{32}^{(1)}\ea\right)+\frac{1}{k^2}\left(\ba{l}\Phi_{12}^{(2)}\\\Phi_{22}^{(2)}\\\Phi_{32}^{(2)}\ea\right)+O(\frac{1}{k^3})\\
{}+\left[\frac{1}{k}\left(\ba{c}-\Phi^{(1)}_{12}(0)\\0\\0\ea\right)+
\frac{1}{k^2}\left(\ba{c}-\Phi^{(2)}_{12}(0)+\Phi^{(1)}_{12}(0)\Phi^{(1)}_{22}+\Phi^{(1)}_{12}(0)\Phi^{(1)}_{32}\\-\frac{i\sig}{2}\bar g_{01}\Phi^{(1)}_{12}(0)\\-\frac{i\sig}{2}\bar g_{02}\Phi^{(1)}_{12}(0)\ea\right)+O(\frac{1}{k^2})\right]e^{-4ik^2t}
\ea
\ee

Similar to the derivation of $\Phi_{i2},i=1,2,3$, from (\ref{Phi3t}) we can get the asymptotic formulas of $\Phi_{i3},i=1,2,3$
\be\label{Phi3albtrsu}
\ba{l}
\left(\ba{l}\Phi_{13}\\\Phi_{23}\\\Phi_{33}\ea\right)=
\left(\ba{l}0\\0\\1\ea\right)+\frac{1}{k}\left(\ba{l}\Phi_{13}^{(1)}\\\Phi_{23}^{(1)}\\\Phi_{33}^{(1)}\ea\right)+
\frac{1}{k^2}\left(\ba{l}\Phi_{13}^{(2)}\\\Phi_{23}^{(2)}\\\Phi_{33}^{(2)}\ea\right)+O(\frac{1}{k^3})\\
{}+\left[\frac{1}{k}\left(\ba{c}-\Phi^{(1)}_{13}(0)\\0\\0\ea\right)+
\frac{1}{k^2}\left(\ba{c}-\Phi^{(2)}_{13}(0)+\Phi^{(1)}_{13}(0)\Phi^{(1)}_{23}+\Phi^{(1)}_{13}(0)\Phi^{(1)}_{33}\\-\frac{i\sig}{2}\bar g_{01}\Phi^{(1)}_{13}(0)\\-\frac{i\sig}{2}\bar g_{02}\Phi^{(1)}_{13}(0)\ea\right)+O(\frac{1}{k^2})\right]e^{-4ik^2t}
\ea
\ee

Similar to (\ref{Phiteqn}), we have the equations $\{\phi_{ij}\}_{i,j=1}^{3}$ satisfy the similar partial derivative equations:\\
\begin{subequations}
From the first column of the equation (\ref{Phit}) we get
\be\label{phi1t}
\left\{
\ba{l}
\phi_{11t}=2k(f_{01}\phi_{21}+f_{02}\phi_{31})-i\sig (|f_{01}|^2+|f_{02}|^2)\phi_{11}+i(f_{11}\phi_{21}+f_{12}\phi_{31}),\\
\phi_{21t}-4ik^2\phi_{21}=2\sig k \bar f_{01}\phi_{11}+i\sig (|f_{01}|^2\phi_{21}+\bar f_{01}f_{02}\phi_{31})-i\sig \bar f_{11}\phi_{11},\\
\phi_{31t}-4ik^2\phi_{31}=2\sig k \bar f_{02}\phi_{11}+i\sig (\bar f_{02}f_{01}\phi_{21}+|f_{02}|^2\phi_{31})-i\sig \bar f_{12}\phi_{11},
\ea
\right.
\ee

From the second column of the equation (\ref{Phit}) we get
\be\label{phi2t}
\left\{
\ba{l}
\phi_{12t}+4ik^2\phi_{12}=2k(f_{01}\phi_{22}+f_{02}\phi_{32})-i\sig (|f_{01}|^2+|f_{02}|^2)\phi_{12}+i(f_{11}\phi_{22}+f_{12}\phi_{32}),\\
\phi_{22t}=2\sig k \bar f_{01}\phi_{12}+i\sig (|f_{01}|^2\phi_{22}+\bar f_{01}f_{02}\phi_{32})-i\sig \bar f_{11}\phi_{12},\\
\phi_{32t}=2\sig k \bar f_{02}\phi_{12}+i\sig (\bar f_{02}f_{01}\phi_{22}+|f_{02}|^2\phi_{32})-i\sig \bar f_{12}\phi_{12},
\ea
\right.
\ee
From the third column of the equation (\ref{Phit}) we get
\be\label{phi3t}
\left\{
\ba{l}
\phi_{13t}+4ik^2\phi_{13}=2k(f_{01}\phi_{23}+f_{02}\phi_{33})-i\sig (|f_{01}|^2+|f_{02}|^2)\phi_{13}+i(f_{11}\phi_{23}+f_{12}\phi_{33}),\\
\phi_{23t}=2\sig k \bar f_{01}\phi_{13}+i\sig (|f_{01}|^2\phi_{23}+\bar f_{01}f_{02}\phi_{33})-i\sig \bar f_{11}\phi_{13},\\
\phi_{33t}=2\sig k \bar f_{02}\phi_{13}+i\sig (\bar f_{02}f_{01}\phi_{23}+|f_{02}|^2\phi_{33})-i\sig \bar f_{12}\phi_{13},
\ea
\right.
\ee
\end{subequations}

Then, substituting these formulas into the equation (\ref{c12def}) and noticing that we assume that the initial value and boundary value are compatible at $x=0$ and $x=L$, we get the asymptotic behavior (\ref{c12largek}) of $c_{1j}(t,k)$ as $k\rightarrow \infty$. Similar to prove the formula (\ref{c13largek}).

\end{proof}

\subsection{The Dirichlet and Neumann problems}
We can now derive effective characterizations of spectral function $S(k),S_L(k)$ for the Dirichlet ($\{g_{01}(t),g_{02}(t)\}$ and $\{f_{01}(t),f_{02}(t)\}$ prescribed), the Neumann ($\{g_{11}(t),g_{12}(t)\}$ and $\{f_{11}(t),f_{12}(t)\}$ prescribed) problems.
\par
Define functions as
\be\label{Omegadef}
f_-(t,k)=f(t,k)-f(t,-k),\quad f_+(t,k)=f(t,k)+f(t,-k),
\ee
Introducing
\be
\Dta(k)=e^{2ikL}-e^{-2ikL},\quad \Sig(k)=e^{2ikL}+e^{-2ikL}
\ee
Denoting $\pt D_3^0$ as the boundary contour which is not included the zeros of $\Dta(k)$.

\begin{theorem}\label{maintheom}
Let $T<\infty$. Let $q_{10}(x),q_{20}(x),0\le x\le L$, be two initial functions.
\par
For the Dirichlet problem it is assumed that the function $\{g_{01}(t),g_{02}(t)\},0\le t<T$, has sufficient smoothness and is compatible with $q_{10}(x),q_{20}(x)$ at $x=t=0$, that is
\[
q_{10}(0)=g_{01}(0),\quad q_{20}(0)=g_{02}(0).
\]
The function $\{f_{01}(t),f_{02}(t)\},0\le t<T$, has sufficient smoothness and is compatible with $q_{10}(x),q_{20}(x)$ at $x=L$, that is,
\[
q_{10}(L)=f_{01}(0),\quad q_{20}(L)=f_{02}(0).
\]
\par
For the Neumann problem it is assumed that the function $g_1(t),0\le t<T$, has sufficient smoothness and is compatible with $q_0(x)$ at $x=t=0$; the function $f_1(t),0\le t<T$, has sufficient smoothness and is compatible with $q_0(x)$ at $x=L$.

\par
For simplicity, we suppose that $m_{11}(\mathcal{A})(k)$ has no zeros in $D_1$.
\par
Then the spectral function $S(k)$ is given by
\be\label{SK}
S(k)=\left(\ba{ccc}
\ol{\Phi_{11}(\bar k)}&-\sig \ol{\Phi_{21}(\bar k)}e^{4ik^2T}&-\sig \ol{\Phi_{31}(\bar k)}e^{4ik^2T}\\
-\sig \ol{\Phi_{12}(\bar k)}e^{-4ik^2T}&\ol{\Phi_{22}(\bar k)}&\ol{\Phi_{32}(\bar k)}\\
-\sig \ol{\Phi_{13}(\bar k)}e^{-4ik^2T}&\ol{\Phi_{23}(\bar k)}&\ol{\Phi_{33}(\bar k)}\ea\right)
\ee
\be\label{SLk}
S_L(k)=\left(\ba{ccc}
\ol{\phi_{11}(\bar k)}&-\sig \ol{\phi_{21}(\bar k)}e^{4ik^2T}&-\sig \ol{\phi_{31}(\bar k)}e^{4ik^2T}\\
-\sig \ol{\phi_{12}(\bar k)}e^{-4ik^2T}&\ol{\phi_{22}(\bar k)}&\ol{\phi_{32}(\bar k)}\\
-\sig \ol{\phi_{13}(\bar k)}e^{-4ik^2T}&\ol{\phi_{23}(\bar k)}&\ol{\phi_{33}(\bar k)}\ea\right)
\ee
and the complex-value functions $\{\Phi_{l3}(t,k)\}_{l=1}^{3}$ satisfy the following system of integral equations:
\begin{subequations}\label{Phi3sys}
\be\label{Phi13sys}
\ba{rl}
\Phi_{13}(t,k)&=\int_0^te^{-4ik^2(t-t')}\left[-i\sig(|g_{01}|^2+|g_{02}|^2)\Phi_{13}\right.\\
&\left.+(2kg_{01}+ig_{11})\Phi_{23}+(2kg_{02}+ig_{12})\Phi_{33}\right](t',k)dt'
\ea
\ee
\be\label{Phi23sys}
\Phi_{23}(t,k)=\int_0^t\sig\left[(2k\bar g_{01}-i\bar g_{11})\Phi_{13}+i|g_{01}|^2\Phi_{23}+i\bar g_{01} g_{02}\Phi_{33}\right](t',k)dt'
\ee
\be\label{Phi33sys}
\Phi_{33}(t,k)=1+\int_0^t\sig\left[(2k\bar g_{02}-i\bar g_{12})\Phi_{13}+ig_{01}\bar g_{02}\Phi_{23}+i|g_{02}|^2\Phi_{33}\right](t',k)dt'
\ee
\end{subequations}
and $\{\Phi_{l1}(t,k)\}_{l=1}^{3},\{\Phi_{l2}(t,k)\}_{l=1}^{3}$ satisfy the following system of integral equations:
\begin{subequations}\label{Phi1sys}
\be\label{Phi11sys}
\ba{rl}
\Phi_{11}(t,k)&=1+\int_0^t\left[-i\sig(|g_{01}|^2+|g_{02}|^2)\Phi_{11}\right.\\
&\left.+(2kg_{01}+ig_{11})\Phi_{21}+(2kg_{02}+ig_{12})\Phi_{31}\right](t',k)dt'
\ea
\ee
\be\label{Phi21sys}
\Phi_{21}(t,k)=\int_0^te^{4ik^2(t-t')}\sig\left[(2k\bar g_{01}-i\bar g_{11})\Phi_{11}+i|g_{01}|^2\Phi_{21}+i\bar g_{01} g_{02}\Phi_{31}\right](t',k)dt'
\ee
\be\label{Phi33sys}
\Phi_{31}(t,k)=\int_0^te^{4ik^2(t-t')}\sig\left[(2k\bar g_{02}-i\bar g_{12})\Phi_{11}+ig_{01}\bar g_{02}\Phi_{21}+i|g_{02}|^2\Phi_{31}\right](t',k)dt'
\ee
\end{subequations}
\begin{subequations}\label{Phi2sys}
\be\label{Phi12sys}
\ba{rl}
\Phi_{12}(t,k)&=\int_0^te^{-4ik^2(t-t')}\left[-i\sig(|g_{01}|^2+|g_{02}|^2)\Phi_{12}\right.\\
&\left.+(2kg_{01}+ig_{11})\Phi_{22}+(2kg_{02}+ig_{12})\Phi_{32}\right](t',k)dt'
\ea
\ee
\be\label{Phi22sys}
\Phi_{22}(t,k)=1+\int_0^t\sig\left[(2k\bar g_{01}-i\bar g_{11})\Phi_{12}+i|g_{01}|^2\Phi_{22}+i\bar g_{01} g_{02}\Phi_{32}\right](t',k)dt'
\ee
\be\label{Phi32sys}
\Phi_{32}(t,k)=\int_0^t\sig\left[(2k\bar g_{02}-i\bar g_{12})\Phi_{12}+ig_{01}\bar g_{02}\Phi_{22}+i|g_{02}|^2\Phi_{32}\right](t',k)dt'
\ee
\end{subequations}
Functions $\{\phi_{ij}(t,k)\}_{i,j=1}^{3}$ satisfy the same integral equations replaying the functions $\{g_{01},g_{02},g_{11},g_{12}\}$ with $\{f_{01},f_{02},f_{11},f_{12}\}$.
\begin{enumerate}
\item For the Dirichlet problem, the unknown Neumann boundary value $\{g_{11}(t),g_{12}(t)\}$ and $\{f_{11}(t),f_{12}(t)\}$ are given by
\begin{subequations}\label{DtoNg1}
\be\label{DtoNg11}
\ba{rl}
g_{11}(t)&=\frac{2}{i\pi}\int_{\pt D^0_3}\frac{\Sig}{\Dta}(k\Phi_{12-}+ig_{01})dk+\frac{2}{\pi}\int_{\pt D^0_3}(g_{01}\Phi_{22-}+g_{02}\Phi_{32-})dk\\
&-\frac{1}{\pi}\int_{\pt D^0_3}(g_{01}\bar\phi_{22-}+g_{02}\bar\phi_{23-})dk-\frac{4\sig}{i\pi}\int_{\pt D^0_3}\frac{1}{\Dta}(k\bar\phi_{21-}+i\sig\bar f_{01})dk\\
&+\frac{4}{i\pi}\int_{\pt D^0_3}\frac{k}{\Dta}\left[\left(\Phi_{12}(\bar\phi_{22}-1)+\Phi_{13}\bar\phi_{23}\right)e^{-2ikL}+\sig(\Phi_{11}-1)\bar\phi_{21}\right]_-dk.
\ea
\ee
\be\label{DtoNg12}
\ba{rl}
g_{12}(t)&=\frac{2}{i\pi}\int_{\pt D^0_3}\frac{\Sig}{\Dta}(k\Phi_{13-}+ig_{02})dk+\frac{2}{\pi}\int_{\pt D^0_3}(g_{01}\Phi_{23-}+g_{02}\Phi_{33-})dk\\
&-\frac{1}{\pi}\int_{\pt D^0_3}(g_{01}\bar\phi_{32-}+g_{02}\bar\phi_{33-})dk-\frac{4\sig}{i\pi}\int_{\pt D^0_3}\frac{1}{\Dta}(k\bar\phi_{31-}+i\sig\bar f_{02})dk\\
&+\frac{4}{i\pi}\int_{\pt D^0_3}\frac{k}{\Dta}\left[\left(\Phi_{12}\bar\phi_{32}+\Phi_{13}(\bar\phi_{33}-1)\right)e^{-2ikL}+\sig(\Phi_{11}-1)\bar\phi_{31}\right]_-dk.
\ea
\ee
\end{subequations}
and
\begin{subequations}\label{DtoNf1}
\be\label{DtoNf11}
\ba{rl}
f_{11}(t)&=-\frac{2}{i\pi}\int_{\pt D^0_3}\frac{\Sig}{\Dta}(k\phi_{12-}+if_{01})dk+\frac{2}{\pi}\int_{\pt D^0_3}(f_{01}\phi_{22-}+f_{02}\phi_{32-})dk\\
&+\frac{1}{\pi}\int_{\pt D^0_3}(f_{01}\bar\Phi_{22-}+f_{02}\bar\Phi_{23-})dk+\frac{4\sig}{i\pi}\int_{\pt D^0_3}\frac{1}{\Dta}(k\bar\Phi_{21-}+i\sig\bar g_{01})dk\\
&+\frac{4}{i\pi}\int_{\pt D^0_3}\frac{k}{\Dta}\left[\sig(\phi_{11}-1)\bar\Phi_{21}-\left(\phi_{12}(\bar\Phi_{22}-1)+\phi_{13}\bar\Phi_{23}\right)e^{2ikL}\right]_-dk.
\ea
\ee
\be\label{DtoNf12}
\ba{rl}
f_{12}(t)&=-\frac{2}{i\pi}\int_{\pt D^0_3}\frac{\Sig}{\Dta}(k\phi_{13-}+if_{02})dk+\frac{2}{\pi}\int_{\pt D^0_3}(f_{03}\phi_{23-}+f_{02}\phi_{33-})dk\\
&+\frac{1}{\pi}\int_{\pt D^0_3}(f_{01}\bar\Phi_{32-}+f_{02}\bar\Phi_{33-})dk+\frac{4\sig}{i\pi}\int_{\pt D^0_3}\frac{1}{\Dta}(k\bar\Phi_{31-}+i\sig\bar g_{02})dk\\
&+\frac{4}{i\pi}\int_{\pt D^0_3}\frac{k}{\Dta}\left[\sig(\phi_{11}-1)\bar\Phi_{31}-\left(\phi_{12}\bar\Phi_{32}+\phi_{13}(\bar\Phi_{33}-1)\right)e^{2ikL}\right]_-dk.
\ea
\ee
where the conjugate of a function $h$ denotes $\bar h=\bar h(\bar k)$.
\end{subequations}
\item For the Neumann problem, the unknown boundary values $\{g_{01}(t),g_{02}(t)\}$ and $\{f_{01}(t),f_{02}(t)\}$ are given by
\begin{subequations}
\be\label{NtoDg01}
\ba{rl}
g_{01}(t)&=\frac{1}{\pi}\int_{\pt D^0_3}\frac{\Sig}{\Dta}\Phi_{12+}dk-\frac{2}{\pi}\int_{\pt D^0_3}\frac{\sig}{\Dta}\bar \phi_{21+}dk\\
&-\frac{2}{\pi}\int_{\pt D^0_3}\frac{1}{\Dta}\left[\sig(\Phi_{11}-1)\bar\phi_{21}-(\Phi_{12}(\bar\phi_{22}-1)+\Phi_{13}\bar\phi_{23})e^{-2ikL}\right]_+dk,
\ea
\ee
\be\label{NtoDg02}
\ba{rl}
g_{02}(t)&=\frac{1}{\pi}\int_{\pt D^0_3}\frac{\Sig}{\Dta}\Phi_{13+}dk-\frac{2}{\pi}\int_{\pt D^0_3}\frac{\sig}{\Dta}\bar \phi_{31+}dk\\
&-\frac{2}{\pi}\int_{\pt D^0_3}\frac{1}{\Dta}\left[\sig(\Phi_{11}-1)\bar\phi_{31}-(\Phi_{12}\bar\phi_{32}+\Phi_{13}(\bar\phi_{33}-1))e^{-2ikL}\right]_+dk,
\ea
\ee
\end{subequations}
and
\begin{subequations}
\be\label{NtoDf01}
\ba{rl}
f_{01}(t)&=-\frac{1}{\pi}\int_{\pt D^0_3}\frac{\Sig}{\Dta}\phi_{12+}dk-\frac{2}{\pi}\int_{\pt D^0_3}\frac{\sig}{\Dta}\bar \Phi_{21+}dk\\
&-\frac{2}{\pi}\int_{\pt D^0_3}\frac{1}{\Dta}\left[\sig(\phi_{11}-1)\bar\Phi_{21}-(\phi_{12}(\bar\Phi_{22}-1)+\phi_{13}\bar\Phi_{23})e^{2ikL}\right]_+dk,
\ea
\ee
\be\label{NtoDf02}
\ba{rl}
f_{02}(t)&=-\frac{1}{\pi}\int_{\pt D^0_3}\frac{\Sig}{\Dta}\phi_{13+}dk-\frac{2}{\pi}\int_{\pt D^0_3}\frac{\sig}{\Dta}\bar \Phi_{31+}dk\\
&-\frac{2}{\pi}\int_{\pt D^0_3}\frac{1}{\Dta}\left[\sig(\phi_{11}-1)\bar\Phi_{31}-(\phi_{12}\bar\Phi_{32}+\phi_{13}(\bar\Phi_{33}-1))e^{2ikL}\right]_+dk,
\ea
\ee
\end{subequations}

\end{enumerate}
\end{theorem}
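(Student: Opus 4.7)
The plan is to prove the theorem in three stages. First, I would derive the Volterra integral equations (\ref{Phi3sys})--(\ref{Phi2sys}) by writing the $t$-part of the Lax pair (\ref{Phit}) at $x=0$ with potential $V_2(0,t,k)$ expressed through $g_{01}, g_{02}, g_{11}, g_{12}$, and integrating the resulting linear ODE system column by column from $t'=0$ to $t'=t$ subject to $\mu_2(0,0,k)=\id$. The exponential kernel $e^{\pm 4ik^2(t-t')}$ comes from the commutator $[2ik^2\Lam,\cdot]$ acting on off-diagonal entries of the first row and column, whereas the central $2\times 2$ block carries no exponential because the corresponding commutator eigenvalues vanish. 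The analogous system for $\{\phi_{ij}\}$ is obtained identically from the $t$-part at $x=L$ with $\mu_3(L,0,k)=\id$ and the $g$'s replaced by the $f$'s.

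Second, I would establish the matrix formulas (\ref{SK}) and (\ref{SLk}) using Lemma~\ref{symmetry}. Starting from $S(k)=e^{-2ik^2T\hat\Lam}\mu_2^{-1}(0,T,k)$, the symmetry $\mu^{-1}(x,t,k) = A\,\ol{\mu(x,t,\bar k)}^{T}A$ gives
\[
\mu_2^{-1}(0,T,k) = A\,\ol{\mu_2(0,T,\bar k)}^{T}A,
\]
whose entries are linear combinations of $\ol{\Phi_{ji}(T,\bar k)}$ with a factor $-\sig$ whenever exactly one of $i,j$ equals $1$. Multiplying on the left by $e^{-2ik^2T\hat\Lam}$ inserts the exponentials $e^{\pm 4ik^2T}$ in precisely the four off-diagonal positions of the first row and first column, producing (\ref{SK}); (\ref{SLk}) follows identically from $S_L(k)=e^{-2ik^2T\hat\Lam}\mu_3^{-1}(L,T,k)$.

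Third, and this is the substantive content, I would extract the unknown boundary values from the global relation (\ref{useglo}). Writing its $(1,2)$ and $(1,3)$ entries gives (\ref{c12def})--(\ref{c13def}), exhibiting $c_{12}$ and $c_{13}$ as linear combinations of $\Phi_{ij}(t,k)$ and $\ol{\phi_{lm}(t,\bar k)}e^{2ikL}$ which are analytic and bounded in $D_1$. I would then multiply these expressions by carefully chosen kernels, namely $k\,\Sig(k)/\Dta(k)$ for the Neumann unknowns $g_{11},g_{12},f_{11},f_{12}$ and $\Sig(k)/\Dta(k)$ or $1/\Dta(k)$ for the Dirichlet unknowns, and integrate along $\pt D_3^0$. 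Because $c_{1j}$ is analytic in $D_1$ (the reflection of $D_3$ through the origin) and the weights are arranged so that the product decays for large $|k|$ in $D_3$, a Cauchy-type argument reduces each contour integral to the coefficient of the relevant negative power of $k$ in the expansion (\ref{c1jlargek}). Using (\ref{g01}) and (\ref{f01}) to translate $\Phi_{12}^{(1)},\Phi_{12}^{(2)},\phi_{12}^{(1)},\phi_{12}^{(2)},\ldots$ into $g_{0i},g_{1i},f_{0i},f_{1i}$, and then solving the resulting algebraic system for the unknowns, produces precisely (\ref{DtoNg1})--(\ref{DtoNf1}) and (\ref{NtoDg01})--(\ref{NtoDf02}).

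The main obstacle is the contour analysis underlying this third step. The factor $1/\Dta(k) = 1/(e^{2ikL}-e^{-2ikL})$ has infinitely many zeros along the real axis, which is precisely why the contour is taken to be $\pt D_3^0$ with those zeros removed; one must verify that the principal-value contributions at the excised zeros cancel between the two halves of the $\pm$-decomposition (\ref{Omegadef}) and that the estimates at infinity on $\pt D_3^0$ are uniform, so that the sum of the contour integrals faithfully reproduces the large-$k$ coefficients of $c_{12}$ and $c_{13}$. A further subtlety is that the mixed terms $\Phi_{1j}\ol{\phi_{jl}}e^{-2ikL}$ appearing in the last bracket of (\ref{DtoNg11})--(\ref{DtoNf12}) involve products of functions with different analyticity properties; the exponential $e^{-2ikL}$ must be paired with a $\phi$-factor to guarantee boundedness on the relevant piece of $\pt D_3^0$, which is exactly the delicate separation between the contours $\gam_3$ and $\gam_4$ emphasized in Section~2 and flagged in the introduction as the main new difficulty on the interval.
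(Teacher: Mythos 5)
Your proposal follows essentially the same route as the paper: the Volterra systems come from the $t$-part of the Lax pair at $x=0$ and $x=L$, the formulas for $S(k)$ and $S_L(k)$ come from $S=e^{-2ik^2T\hat\Lam}\mu_2^{-1}(0,T,k)$ combined with the symmetry of Lemma~\ref{symmetry}, and the boundary-value maps come from representing $\Phi^{(1)}_{12},\Phi^{(2)}_{12},\ldots$ as contour integrals over $\pt D_3^0$, splitting off the $\Sig/\Dta$ and $e^{\mp 2ikL}/\Dta$ pieces, and evaluating the remainder via the global relation together with the large-$k$ asymptotics of $c_{1j}$ and Cauchy's theorem. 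You also correctly flag the two genuine delicacies (the excised zeros of $\Dta$ and the pairing of $e^{\pm 2ikL}$ with the appropriate $\Phi$- or $\phi$-factors for boundedness), so the plan matches the paper's proof in substance.
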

\begin{proof}
The representations (\ref{SK}) and (\ref{SLk}) follow from the relation $S(k)=e^{-2ik^2t\hat \Lam}\mu_2^{-1}(0,t,k)$ and $S_L(k)=e^{-2ik^2t\hat\Lam}\mu_3^{-1}(0,t,k)$, respectively. And the system (\ref{Phil3sys}) is the direct result of the Volteral integral equations of $\mu_2(0,t,k)$ and $\mu_3(L,t,k)$.
\begin{enumerate}
\item In order to derive (\ref{DtoNg11}) we note that equation (\ref{g1}) expresses $g_{11}(t)$ 
in terms of $\Phi_{22}^{(1)}(t,k),\Phi_{32}^{(1)}(t,k),\Phi^{(2)}_{12}(t,k)$. 
Furthermore,  equation (\ref{mu2x0tk}) and Cauchy theorem imply
    \[
    \ba{l}
    -\frac{\pi i}{2}\Phi_{22}^{(1)}(t)=\int_{\pt D_2}[\Phi_{22}(t,k)-1]dk=\int_{\pt D_4}[\Phi_{2\times 2}(t,k)-1]dk,\\
    -\frac{\pi i}{2}\Phi_{32}^{(1)}(t)=\int_{\pt D_2}\Phi_{32}(t,k)dk=\int_{\pt D_4}\Phi_{32}(t,k)dk,
    \ea
    \]
    and
    \[
    -\frac{\pi i}{2}\Phi_{12}^{(2)}(t)=\int_{\pt D_2}\left[k\Phi_{12}(t,k)-\frac{g_{01}(t)}{2i}\right]dk=\int_{\pt D_4}\left[k\Phi_{12}(t,k)-\frac{g_{01}(t)}{2i}\right]dk,\\
    \]
    Thus,
    \be\label{Phi331}
    \ba{rl}
    i\pi\Phi_{22}^{(1)}(t)&=-\left(\int_{\pt D_2}+\int_{\pt D_4}\right)[\Phi_{22}(t,k)-1]dk\\
    &=\left(\int_{\pt D_1}+\int_{\pt D_3}\right)[\Phi_{22}(t,k)-1]dk\\
    &=\int_{\pt D_3}[\Phi_{22}(t,k)-1]dk-\int_{\pt D_3}[\Phi_{22}(t,-k)-1]dk\\
    &=\int_{\pt D_3}(\Phi_{22}(t,k)-\Phi_{22}(t,-k))dk\\
    &=\int_{\pt D_3}\Phi_{22-}(t,k)dk\\
    i\pi\Phi_{32}^{(1)}(t)&=-\left(\int_{\pt D_2}+\int_{\pt D_4}\right)[\Phi_{32}(t,k)]dk\\
    &=\int_{\pt D_3}\Phi_{32-}(t,k)dk
    \ea
    \ee
    Similarly,
    \be\label{Phi132}
    \ba{rl}
    i\pi \Phi_{12}^{(2)}(t)&=\left(\int_{\pt D_3}+\int_{\pt D_1}\right)\left[k\Phi_{12}(t,k)-\frac{g_{01}(t)}{2i}\right]dk\\
    &=\int_{\pt D_3}\left[k\Phi_{12}(t,k)-\frac{g_{01}(t)}{2i}\right]_-dk\\
    &=\int_{\pt D^0_3}\left\{k\Phi_{12}(t,k)-\frac{g_{01}(t)}{2i}+\frac{2e^{-2ikL}}{\Dta}[k\Phi_{12}(t,k)-\frac{g_{01}(t)}{2i}]\right\}_-dk+I(t)
    \ea
    \ee
    where $I(t)$ is defined by
    \[
    I(t)=-\int_{\pt D^0_3}\left\{\frac{2e^{-2ikL}}{\Dta}[k\Phi_{12}(t,k)-\frac{g_{01}(t)}{2i}]\right\}_-dk
    \]
    The last step involves using the global relation (\ref{vanishglo}) to compute $I(t)$. That is,
    \be\label{DtoNIt}
    \ba{rl}
    I(t)&=\int_{\pt D^0_3}\left\{-\frac{2e^{-2ikL}}{\Dta}\left[kc_{12}-\Phi^{(1)}_{12}-\frac{\Phi^{(1)}_{12}\bar \phi^{(1)}_{22}+\Phi^{(1)}_{13}\bar \phi^{(1)}_{23}}{k}+\sig \bar \phi^{(1)}_{21}e^{2ikL}\right]\right\}_-dk\\
    &+\int_{\pt D^0_3}\left\{-\frac{2e^{-2ikL}}{\Dta}\left[\frac{\Phi^{(1)}_{12}\bar \phi^{(1)}_{22}+\Phi^{(1)}_{13}\bar \phi^{(1)}_{23}}{k}+\sig (k\bar\phi_{21}-\bar \phi^{(1)}_{21})e^{2ikL}\right]\right\}_-dk\\
    &+\int_{\pt D^0_3}\left\{\frac{2e^{-2ikL}}{\Dta}\left[k\left(\Phi_{12}(\bar\phi_{22}-1)+\Phi_{13}\bar\phi_{23}+\sig(\Phi_{11}-1)\bar\phi_{21}e^{2ikL}\right)\right]\right\}_-dk
    \ea
    \ee
    Using the asymptotic (\ref{cjlargek}) and Cauchy theorem to compute these terms on the right-hand side of equation (\ref{DtoNIt}), we find
    \be\label{DtoNItres}
    \ba{rl}
    I(t)&=-i\pi \Phi^{(2)}_{12}-\int_{\pt D^0_3}\left[\frac{g_{01}}{2i}\bar\phi_{22-}-\frac{2\sig}{\Dta}(k\bar\phi_{21-}+i\sig f_{01})\right]dk\\
    &+\int_{\pt D^0_3}\frac{2k}{\Dta}\left[\left(\Phi_{12}(\bar\phi_{22}-1)+\Phi_{13}\bar\phi_{23}\right)e^{-2ikL}+\sig(\Phi_{11}-1)\bar\phi_{21}\right]_-dk.
    \ea
    \ee
    Equations (\ref{Phi331}), (\ref{Phi132}) and (\ref{DtoNItres}) together with (\ref{g1}) yield (\ref{DtoNg11}). Similarly, we can prove (\ref{DtoNg12}).

    \par
    The expression (\ref{DtoNf11}) for $f_{11}(t)$ can be derived in a similar way. Indeed, we note that equation (\ref{f1}) expresses $f_{11}(t)$ in terms of $\phi_{22}^{(1)}(t,k),\phi_{32}^{(1)}(t,k),\phi^{(2)}_{12}(t,k)$. These three functions satisfy the analog of equations (\ref{Phi331}) and $\ref{Phi132}$. In particular, $\phi^{(2)}_{12}$ satisifies
    \be\label{phi122}
    \ba{rl}
    i\pi\phi^{(2)}_{12}(t)&=\int_{\pt D_3}\left[k\phi_{12}-\phi^{(1)}_{12}\right]_-dk\\
    &=\int_{\pt D^0_3}\left\{-\frac{\Sig}{\Dta}(k\phi_{12}-\phi^{(1)}_{12})\right\}_-dk+J(t)
    \ea
    \ee
    where
    \be
    J(t)=\int_{\pt D^0_3}\left\{\frac{2e^{2ikL}}{\Dta}(k\phi_{12}-\phi^{(1)}_{12})\right\}_-dk
    \ee
    Then using the global relation (\ref{vanishglo}) to compute $J(t)$:
    \be\label{Jt}
    \ba{rl}
    J(t)&=\int_{\pt D^0_3}\left\{-\frac{2}{\Dta}\left[k\sig\bar c_{21}+\phi^{(1)}_{12}e^{2ikL}-\sig\bar\Phi^{(1)}_{21}+\frac{\phi^{(1)}_{12}\bar\Phi_{22}+\phi_{13}\bar\Phi_{23}}{k}e^{2ikL}\right]\right\}_-dk\\
    &+\int_{\pt D^0_3}\frac{2}{\Dta}\left\{\frac{\phi^{(1)}_{12}\bar\Phi_{22}+\phi_{13}\bar\Phi_{23}}{k}e^{2ikL}+\sig(k\bar\Phi_{21}-\bar\Phi^{(1)}_{21})\right\}_-dk\\
    &+\int_{\pt D^0_3}\frac{2k}{\Dta}\left[\sig(\phi_{11}-1)\bar\Phi_{21}-(\phi_{12}(\bar\Phi_{22}-1)+\phi_{13}\bar\phi_{23})e^{2ikL}\right]_-dk
    \ea
    \ee
    The equations (\ref{Jt}), (\ref{phi122}) together with the asymptotics of $c_{21}(t,k)$ yield (\ref{DtoNf11}). The proof of (\ref{DtoNf12}) is similar.

    \bigskip

    \item In order to derive the representations (\ref{NtoDg01}) relevant for the Neumann problem, we note that equation (\ref{g0}) expresses $g_{01}$ and $g_{02}$ in terms of $\Phi^{(1)}_{12}$ and $\Phi^{(1)}_{13}$, respectively. Furthermore, equation (\ref{mu2x0tk}) and Cauchy theorem imply

        \be\label{Phi131}
        \ba{l}
        -\frac{\pi i}{2}\Phi^{(1)}_{12}(t)=\int_{\pt D_2}\Phi_{12}(t,k)dk=\int_{\pt D_4}\Phi_{12}(t,k)dk,
        \ea
        \ee
        Thus,
        \be\label{Phi1j1}
        \ba{l}
        i\pi \Phi^{(1)}_{12}(t)=\left(\int_{\pt D_3}+\int_{\pt D_1}\right)\Phi_{12}(t,k)dk\\
        {}=\int_{\pt D_3}\Phi_{12-}(t,k)dk\\
        {}=\int_{\pt D^0_3}\left(\frac{\Sig}{\Dta}\Phi_{12+}(t,k)\right)dk+K(t),
        \ea
        \ee
        where
        \be
        K(t)=-\int_{\pt D^0_3}\frac{2}{\Dta}\left(e^{-2ikL}\Phi_{1j}\right)_+dk
        \ee
        and using the global relation and the asymptotic formulas of $c_{12}$, we have
        \be\label{Kt}
        \ba{l}
        K(t)=-i\pi\Phi^{(1)}_{12}-\int_{\pt D^0_3}\left\{\frac{2\sig}{\Dta}\bar\phi_{21+}+\frac{2}{\Dta}\left[\sig(\Phi_{11}-1)\bar\phi_{21}-\left(\Phi_{12}(\bar\phi_{22}-1)+\Phi_{13}\bar\phi_{23}e^{-2ikL}\right)\right]_+\right\}dk
        \ea
        \ee
        Equations (\ref{g0}), (\ref{Phi1j1}) and (\ref{Kt}) yields (\ref{NtoDg01}). The proof of the other formulas is similar.
\end{enumerate}
\end{proof}

\subsection{Effective characterizations}
Substituting into the system (\ref{Phil3sys}) the expressions
\begin{subequations}
\be\label{Phij3eps}
\Phi_{ij}=\Phi_{ij,0}+\eps\Phi_{ij,1}+\eps^2\Phi_{ij,2}+\cdots,\quad i,j=1,2,3.
\ee
\be\label{phij3eps}
\phi_{ij}=\phi_{ij,0}+\eps\phi_{ij,1}+\eps^2\phi_{ij,2}+\cdots,\quad i,j=1,2,3.
\ee
\be\label{g0eps}
g_{01}=\eps g^{(1)}_{01}+\eps^2 g^{(2)}_{01}+\cdots,\quad g_{02}=\eps g^{(1)}_{02}+\eps^2 g^{(2)}_{02}+\cdots,
\ee
\be\label{f0eps}
f_{01}=\eps f^{(1)}_{01}+\eps^2 f^{(2)}_{01}+\cdots,\quad f_{02}=\eps f^{(1)}_{02}+\eps^2 f^{(2)}_{02}+\cdots,
\ee
\be\label{g1eps}
g_{11}=\eps g^{(1)}_{11}+\eps^2 g^{(2)}_{11}+\cdots,\quad g_{12}=\eps g^{(1)}_{12}+\eps^2 g^{(2)}_{12}+\cdots,
\ee
\be\label{g1eps}
f_{11}=\eps f^{(1)}_{11}+\eps^2 f^{(2)}_{11}+\cdots,\quad f_{12}=\eps f^{(1)}_{12}+\eps^2 f^{(2)}_{12}+\cdots,
\ee
\end{subequations}
where $\eps>0$ is a small parameter, we find that the terms of $O(1)$ give
\be\label{Oeps0}
O(1):\left\{
\ba{ccc}
\Phi_{13,0}=0 & \Phi_{23,0}=0 & \Phi_{33,0}=1,\\
\Phi_{11,0}=1 & \Phi_{21,0}=0 & \Phi_{31,0}=0,\\
\Phi_{12,0}=0 & \Phi_{22,0}=1 & \Phi_{32,0}=0.
\ea
\right.
\ee
Moreover, the terms of $O(\eps)$ give
\be\label{Oeps}
O(\eps):\left\{
\ba{l}
\Phi_{33,1}=0 \quad \Phi_{23,1}=0,\\
\Phi_{13,1}(t,k)=\int_0^te^{-4ik^2(t-t')}(2kg^{(1)}_{02}+ig^{(1)}_{12})(t')dt',\\
\Phi_{11,1}=0,\\
\Phi_{21,1}=\int_{0}^{t}\sig e^{4ik^2(t-t')}(2k\bar g^{(1)}_{01}-i\bar g^{(1)}_{11})(t')dt',\\
\Phi_{31,1}=\int_{0}^{t}\sig e^{4ik^2(t-t')}(2k\bar g^{(1)}_{02}-i\bar g^{(1)}_{12})(t')dt',\\
\Phi_{12,1}=\int_{0}^{t}e^{-4ik^2(t-t')}(2kg^{(1)}_{01}+ig^{(1)}_{11})(t')dt',\\
\Phi_{22,1}=0,\quad \Phi_{32,1}=0.
\ea
\right.
\ee
and the terms of $O(\eps^2)$ give
\be\label{Oeps2}
O(\eps^2):\left\{
\ba{l}
\Phi_{13,2}=\int_{0}^{t}e^{-4ik^2(t-t')}(2kg^{(2)}_{02}+ig^{(2)}_{12})(t')dt',\\
\Phi_{23,2}=\int_{0}^{t}\sig[(2k\bar g^{(1)}_{01}-i\bar g^{(1)}_{11})(t')\Phi_{13,1}(t',k)+i\bar g^{(1)}_{01}(t')g^{(1)}_{02}(t')]dt',\\
\Phi_{33,2}=\int_{0}^{t}\sig[(2k\bar g^{(1)}_{02}-i\bar g^{(1)}_{12})(t')\Phi_{13,1}(t',k)+i|g^{(1)}_{02}(t')|^2]dt',\\
\Phi_{11,2}=\int_{0}^{t}\left[-i\sig(|g^{(2)}_{01}|^2+|g^{(2)}_{02}|^2)(t')+(2kg^{(1)}_{01}+ig^{(1)}_{11})(t')\Phi_{21,1}(t',k)\right.\\
{}\left.+(2kg^{(1)}_{02}+ig^{(1)}_{12})(t')\Phi_{31,1}(t',k)\right]dt',\\
\Phi_{21,2}=\int_{0}^{t}\sig e^{4ik^2(t-t')}(2k\bar g^{(2)}_{01}-i\bar g^{(2)}_{11})(t')dt',\\
\Phi_{31,2}=\int_{0}^{t}\sig e^{4ik^2(t-t')}(2k\bar g^{(2)}_{02}-i\bar g^{(2)}_{12})(t')dt',\\
\Phi_{12,2}=\int_{0}^{t}e^{4ik^2(t-t')}(2kg^{(2)}_{01}+ig^{(2)}_{11})(t')dt',\\
\Phi_{22,2}=\int_{0}^{t}\sig\left[(2k\bar g^{(1)}_{01}-i\bar g^{(1)}_{11})(t')\Phi_{12,1}(t',k)+i|g^{(1)}_{01}(t')|^2\right]dt',\\
\Phi_{32,2}=\int_{0}^{t}\sig\left[(2k\bar g^{(1)}_{02}-i\bar g^{(1)}_{12})(t')\Phi_{12,1}(t',k)+ig^{(1)}_{01}(t')\bar g^{(1)}_{02}(t')\right]dt'.
\ea
\right.
\ee

Similarly, we will have the analogue formulas for $\{\phi_{ij,l}\}_{i,j=1}^{3},l=0,1,2$ expressed in terms of the boundary data at $x=L$, that is $\{f^{(l)}_{ij}\}_{i=0,1}^{j=1,2},l=1,2$.
\par
On the other hand, expanding (\ref{DtoNg1}) and assuming for simplicity that $m_{11}(\mathcal{A})(k)$ has no zeros, we find
\begin{subequations}\label{DtoNgf^1}
\be\label{DtoNg^11}
g^{(1)}_{11}(t)=\frac{2}{\pi i}\int_{\pt D^0_3}(k\Phi_{12,1-}(t,k)+ig^{(1)}_{01})dk-\frac{4\sig}{\pi i}\int_{\pt D^0_3}\frac{1}{\Dta}(k\bar\phi_{21,1-}+i\sig\bar f^{(1)}_{01})dk,
\ee
\be\label{DtoNg^12}
g^{(1)}_{12}(t)=\frac{2}{\pi i}\int_{\pt D^0_3}(k\Phi_{13,1-}(t,k)+ig^{(1)}_{02})dk-\frac{4\sig}{\pi i}\int_{\pt D^0_3}\frac{1}{\Dta}(k\bar\phi_{31,1-}+i\sig\bar f^{(1)}_{02})dk,
\ee
\be\label{DtoNg^11}
f^{(1)}_{11}(t)=-\frac{2}{\pi i}\int_{\pt D^0_3}(k\phi_{12,1-}(t,k)+if^{(1)}_{01})dk+\frac{4\sig}{\pi i}\int_{\pt D^0_3}\frac{1}{\Dta}(k\bar\Phi_{21,1-}+i\sig\bar g^{(1)}_{01})dk,
\ee
\be\label{DtoNg^12}
f^{(1)}_{12}(t)=-\frac{2}{\pi i}\int_{\pt D^0_3}(k\phi_{13,1-}(t,k)+if^{(1)}_{02})dk+\frac{4\sig}{\pi i}\int_{\pt D^0_3}\frac{1}{\Dta}(k\bar\Phi_{31,1-}+i\sig\bar g^{(1)}_{02})dk,
\ee
\end{subequations}
\par
We also find that
\be\label{Om^1}
\ba{l}
\Phi_{12,1-}=4k\int_{0}^{t}e^{-4ik^2(t-t')}g^{(1)}_{01}(t')dt',\\
\Phi_{13,1-}=4k\int_{0}^{t}e^{-4ik^2(t-t')}g^{(1)}_{02}(t')dt',\\
\phi_{21,1-}=4\sig k\int_{0}^{t}e^{4ik^2(t-t')}\bar f^{(1)}_{01}(t')dt',\\
\phi_{31,1-}=4\sig k\int_{0}^{t}e^{4ik^2(t-t')}\bar f^{(1)}_{02}(t')dt'.\\
\ea
\ee
The Dirichlet problem can now be solved perturbatively as follows: assuming for simplicity that $m_{11}(\mathcal{A})(k)$ has no zeros and given $g^{(1)}_{01},g^{(1)}_{02}$ and $f^{(1)}_{01},f^{(1)}_{02}$, we can use equation (\ref{Om^1}) to determine $\Om^{(1)}$. We can then compute $g^{(1)}_{11},g^{(1)}_{12}$ and $f^{(1)}_{11},f^{(1)}_{12}$ from (\ref{DtoNg^1}) and then $\Phi_{1j,1},j=2,3$ from (\ref{Oeps}) and the analogue results for $\phi_{j1,1},j=2,3$. In the same way we can determine $\Phi_{1j,2},j=2,3$ from (\ref{Oeps2}) and the analogue results for $\phi_{j1,2},j=2,3$, then compute $g^{(2)}_{11},g^{(2)}_{12}$ and $f^{(2)}_{11},f^{(2)}_{12}$. And these arguments can be extended to the higher order and also can be extended to the systems (\ref{Phi1sys}), (\ref{Phi2sys}) and (\ref{Phi3sys}), thus yields a constructive scheme for computing $S(k)$ to all orders. The construction of $S_L(k)$ is similar.
\par
Similarly, these arguments also can be used to the Neumann problem. That is to say, in all cases, the system can be solved perturbatively to all orders.

\subsection{The large $L$ limit}

In the limit $L\rightarrow \infty$, the representations for $g_{11}(t),g_{12}(t)$ and $g_{01}(t),g_{02}(t)$ of theorem \label{maintheom} reduce to the corresponding representations on the half-line. Indeed, as $L\rightarrow \infty$,
\[
\ba{l}
f_{01}\rightarrow 0,\quad f_{02}\rightarrow 0,\quad f_{11}\rightarrow 0,\quad f_{12}\rightarrow 0,\\
\phi_{ij}\rightarrow \dta_{ij},\quad \frac{\Sig}{\Dta}\rightarrow 1\mbox{ as $k\rightarrow \infty$ in $D_3$}
\ea
\]
Thus, the $L\rightarrow \infty$ limits of the representations (\ref{DtoNg11}), (\ref{DtoNg12}) and (\ref{NtoDg01}), (\ref{NtoDg02}) are
\be
\ba{rl}
g_{11}(t)&=\frac{2}{i\pi}\int_{\pt D^0_3}(k\Phi_{12-}+ig_{01})dk+\frac{2}{\pi}\int_{\pt D^0_3}(g_{01}\Phi_{22-}+g_{02}\Phi_{32-})dk-\frac{1}{\pi}\int_{\pt D^0_3}g_{01}\bar\phi_{22-}dk\\
g_{12}(t)&=\frac{2}{i\pi}\int_{\pt D^0_3}(k\Phi_{13-}+ig_{02})dk+\frac{2}{\pi}\int_{\pt D^0_3}(g_{01}\Phi_{23-}+g_{02}\Phi_{33-})dk-\frac{1}{\pi}\int_{\pt D^0_3}g_{02}\bar\phi_{33-}dk.
\ea
\ee
and
\be
\ba{ll}
g_{01}(t)=\frac{1}{\pi}\int_{\pt D^0_3}\Phi_{12+}dk,&
g_{02}(t)=\frac{1}{\pi}\int_{\pt D^0_3}\Phi_{13+}dk,
\ea
\ee
respectively, and these formulas coincide with the corresponding half-line formulas, see \cite{jf3}.
\bigskip

{\bf Acknowledgements}
This work of Xu was supported by National
Science Foundation of China under project NO.11501365, Shanghai Sailing Program
supported by Science and Technology Commission of Shanghai Municipality
under Grant NO.15YF1408100, Shanghai youth teacher assistance program NO.ZZslg15056 and the Hujiang Foundation of China (B14005). Fan was support by grants from the National
Science Foundation of China (Project No.10971031; 11271079; 11075055).

\end{document}